\documentclass[a4paper, 12pt]{article} 
\usepackage[centertags]{amsmath}
\usepackage{amsmath}
\usepackage[dutch,british]{babel}
\usepackage{amsfonts}
\usepackage{amssymb} 
\usepackage{amsthm}
\usepackage{newlfont}
\usepackage{color}
\usepackage{subfig}
 \usepackage{bbm}
 \usepackage{float}
 \usepackage{import}

\usepackage{stmaryrd}
\usepackage{mathrsfs}
\usepackage{pstricks,pst-node}
\usepackage{palatino}  
\usepackage{fancyhdr}
\usepackage{graphicx}
\usepackage{fancybox}
\usepackage{geometry}
 \usepackage{psfrag}
 
 \usepackage{multirow}



\theoremstyle{plain}
  \newtheorem{theorem}{Theorem}[section]
  
  \newtheorem{proposition}[theorem]{Proposition}
  \newtheorem{lemma}[theorem]{Lemma}
  \newtheorem{remark}[theorem]{Remark}
\theoremstyle{definition}
  \newtheorem{definition}{Definition}[section]
  \newtheorem{assumption}[theorem]{Assumption}

\theoremstyle{remark}

\numberwithin{equation}{section}

\DeclareMathOperator{\Tr}{Tr}

 \DeclareMathOperator{\supp}{Supp}

\newcommand\otimesal{\mathop{\hbox{\raise 1.6 ex
  \hbox{$\scriptscriptstyle\mathrm{al}$}
\kern -0.92 em \hbox{$\otimes$}}}}
\newcommand\oplusal{\mathop{\hbox{\raise 1.6 ex
  \hbox{$\scriptscriptstyle\mathrm{al}$}
\kern -0.92 em \hbox{$\oplus$}}}}
\newcommand\Gammal{\hbox{\raise 1.7 ex
\hbox{$\scriptscriptstyle\mathrm{al}$}\kern -0.50 em $\Gamma$}}
\renewcommand\i{\mathrm{i}}


\let\al=\alpha \let\be=\beta  \let\ep=\epsilon

  \let\ga=\gamma 
\let\ka=\kappa \let\la=\lambda \let\om=\omega

 \let\Ga=\Gamma  \let\Om=\Omega


\newcommand{\caB}{{\mathcal B}}
\newcommand{\caC}{{\mathcal C}}
\newcommand{\caD}{{\mathcal D}}

\newcommand{\caK}{{\mathcal K}}

\newcommand{\caO}{{\mathcal O}}

\newcommand{\caW}{{\mathcal W}}


\newcommand{\scrH}{{\mathscr H}}


\newcommand{\bbC}{{\mathbb C}}

\newcommand{\bbN}{{\mathbb N}}

\newcommand{\bbR}{{\mathbb R}}

\newcommand{\opunit}{\text{1}\kern-0.22em\text{l}}


\newcommand{\frg}{{\mathfrak g}}
\newcommand{\frh}{{\mathfrak h}}






\newcommand{\e}{{\mathrm e}}

\renewcommand{\d}{{\mathrm d}}
\newcommand{\sys}{{\mathrm S}}
\newcommand{\res}{{\mathrm F}}

\newcommand{\Ran}{\mathrm{Ran}}

\newcommand{\Dom}{\mathrm{Dom}}

\newcommand{\beq}{ \begin{equation} }
\newcommand{\eeq}{ \end{equation} }
\newcommand{\baq}{\begin{eqnarray}}
\newcommand{\eaq}{\end{eqnarray}}
\newcommand{\bet}{ \begin{theorem} }
\newcommand{\eet}{ \end{theorem} }

\newcommand{\lone}{\mathbbm{1}}
\renewcommand{\supp}{\mathrm{Supp}}

\newcommand{\norm}{ \|}
\newcommand{\str}{ |}
\newcommand{\inter}{\mathrm{I}}


 \newcounter{smallarabics}
\newenvironment{arabicenumerate}
{\begin{list}{{\normalfont\textrm{\arabic{smallarabics})}}}
  {\usecounter{smallarabics}\setlength{\itemindent}{0cm}
  \setlength{\leftmargin}{5ex}\setlength{\labelwidth}{4ex}
  \setlength{\topsep}{0.75\parsep}\setlength{\partopsep}{0ex}
   \setlength{\itemsep}{0ex}}}
{\end{list}}

\newcounter{smallroman}

\newcommand{\ben}{\begin{arabicenumerate}}
\newcommand{\een}{\end{arabicenumerate}}


\newcommand{\smallv}{v_1}
\newcommand{\largev}{v_2}
\newcommand{\extension}{\mathrm{ex}}
\newcommand{\gs}{\mathrm{gs}}

%
%


\begin{document}

\begin{center}
\Large{ \bf{Asymptotic completeness  for the massless spin-boson model}}
 \\
\vspace{15pt} \normalsize
{\bf   W.  De Roeck\footnote{
email: {\tt w.deroeck@thphys.uni-heidelberg.de}}  }\\
\vspace{10pt} 
{\it   Institut f\"ur Theoretische Physik  \\ Universit\"at Heidelberg \\
Philosophenweg 16,  \\
D69120 Heidelberg,  Germany 
} \\

\vspace{15pt} \normalsize
{\bf M. Griesemer\footnote{
email: {\tt marcel@mathematik.uni-stuttgart.de  }}  }\\
\vspace{10pt} 
{\it   Department of Mathematics \\
Universit{\"a}t Stuttgart \\ 
Pfaffenwaldring 57\\ D-70569 Stuttgart, Germany 
} \\

\vspace{15pt}

{\bf   A. Kupiainen\footnote{
email: {\tt antti.kupiainen@helsinki.fi}}  }\\
\vspace{10pt} 
{\it   Department of Mathematics \\
 University of Helsinki \\ 
P.O. Box 68, FIN-00014,  Finland 
} \\

\vspace{15pt}

\end{center}

\begin{abstract}
We consider generalized versions of the massless spin-boson model.  Building on the recent work in \cite{deroeckkupiainenphotonbound} and \cite{deroeckkupiainenpropagation}, we prove asymptotic completeness.  \end{abstract}

\section{Introduction}
This paper is concerned with the scattering theory for generalized spin-boson
models with massless bosons. That is, we consider a spin system (an ``atom'') coupled
to a scalar field of quantized massless bosons. With the help of
previously established properties, such as relaxation to the ground
state and a uniform bound on the number of soft bosons, we now show that
asymptotic completeness holds provided the excited states of the
uncoupled system of spin and bosons have finite life times once the interaction is
turned on. (Fermi-Golden Rule condition.)

To describe our main result and its proof we now introduce the system
in some detail. We confine ourselves to a concrete system satisfying
all our assumptions. More general hypotheses are described in the next section. 
Our model consists of a small system (atom, spin) coupled to a free bosonic field.  The Hilbert space of the total system is 
$$
    \scrH = \scrH_\sys \otimes \scrH_\res
$$
where  $\scrH_\sys = \bbC^n$ for some $n<\infty$ ($\sys$ for ``small
system'') and the field space $\scrH_\res$ is the symmetric Fock space over $L^2(\bbR^3)$.
The total Hamiltonian is of the form
$$
H= H_\sys \otimes \lone+ \lone \otimes H_\res + H_{\inter}
$$
where $H_\sys$ is a hermitian matrix with simple eigenvalues only, and
$H_\res $ denotes the Hamiltonian of the free \emph{massless}
field. The coupling  operator $H_{\inter}$ is of the form 
$$
H_{\inter}= \la D \otimes \int_{\bbR^3}\big(\hat\phi(k)a^*_k+\overline{ \hat\phi(k)}a_k\big)\d k.
$$
Here $D=D^*$ is a matrix acting on $\scrH_\sys$,  $\la \in \bbR$ is a
sufficiently small coupling constant,  and $a^*_{k}, a_{k} $ are the usual creation and annihilation
operators of a mode $k\in \bbR^3$ satisfying the ``Canonical
Commutation Relations''.  The function $\hat\phi \in L^2(\bbR^3)$ is a ``form factor'' that
imposes some infrared regularity and an ultraviolet cutoff: simple
and sufficient assumptions are that $\hat\phi$ has compact support and
that $\hat\phi\in C^3(\bbR^3\backslash\{0\})$ with 
$$
        \hat\phi(k) = |k|^{(\alpha-1)/2},\qquad |k|\leq 1,
$$ 
for some $\alpha>0$. This assumption ensures, e.g., that $H$ has a unique
ground state $\Psi_{\gs}$. To rule out the existence of excited bound states we assume the
Fermi-Golden Rule condition stated in the next section. Further
spectral input is not needed but our results do certainly have non-trivial
consequences for the spectrum of $H$.

Under the time-evolution generated by $H$ it is expected that every excited state relaxes to the ground state by
emission of photons whose dynamics is asymptotically free.
The existence of excited states with this property is well known \cite{FGSch2001,GZ2009}; they
are spanned by products of asymptotic creation operators applied to
the ground state $\Psi_{\gs}$, that is, by vectors of the form:
$$
      a_{+}^{*}(f_1)\ldots  a_{+}^{*}(f_m)\Psi_{\gs} = \lim_{t\to\infty} e^{i(H-E)t} a^{*}(f_{1,t})\ldots a^{*}(f_{m,t})\Psi_{\gs},
$$
where $f_{t}= e^{-i\omega t}f$ and $\omega(k)=|k|$.
Asymptotic completeness of Rayleigh scattering means that the span of
these vectors is dense in $\scrH$.
In particular, the representation of the CCR given by the asymptotic
creation and annihilation is equivalent to the Fock representation; non-Fock representation such as those discussed in \cite{DG2004} do not occur. 

Asymptotic completeness of Rayleigh scattering is expected to hold for a large class of 
models of atoms interacting with quantized radiation. Yet, despite considerable efforts \cite{HS1995,DG1999,FGSch2002,Gerard2002}, it has so far resisted a
rigorous proof. The main stumbling blocks were the lack of time-independent photon bounds and a the lack of a quantitative understanding of a property called \emph{relaxation to the ground state} (the zero temperature analog of return to equlibrium), which is known to follow from asymptotic completeness \cite{FGSch2001}. 
For the spin-boson model, these problems were solved in
the previous papers  \cite{deroeckkupiainenphotonbound, deroeckkupiainenpropagation}. We now use these results and expand on them
to give a complete proof of asymptotic completeness for a fairly large
class of \emph{massless} spin-bosons system.
Independently from us Faupin and Sigal had embarked on a similar project
using results from \cite{deroeckkupiainenphotonbound}. They obtain AC
for a class of spin-boson models very similar to ours.

Our proof of asymptotic completeness relies on methods and tools developed for the purpose of establishing
the relaxation to equilibrium  (and non-equilibrium steady states) of systems at positive temperature.  These methods are based on analogues with one-dimensional statistical mechanics. In fact, by a cluster expansion we have previously shown (in \cite{deroeckkupiainenphotonbound})
that a weak form of relaxation to the ground state holds, by which we mean that
\beq\label{key1}
  \lim_{t  \to \infty}  \langle \Psi_t, O\Psi_t \rangle =    \langle
\Psi_{\gs},  O \Psi_{\gs} \rangle
\eeq
for a suitable $C^{*}$ algebra of observables $O$. Our second key
tool, established by the same cluster expansion, states that the number
of soft photons emitted in the process of relaxation is uniformly
bound in time. More precisely, there exits a constant $\kappa>0$ such
that
\beq\label{key2}
\sup_{t }  \langle \Psi_t, \e^{\ka N} \Psi_t \rangle  <\infty
\eeq
where $N$ denotes the number operator on Fock space.
Our third key ingredient concerns the number of bosons in a ball $|x|<
vt$ where $v<1$, the speed of light. We show that 
\beq\label{key3}
    \langle\Psi_{t}, \d\Ga (\theta_t)\Psi_{t}\rangle =
    \langle\Psi_{\gs}, \d\Ga (\theta_t)\Psi_{\gs}\rangle +  \caO( \langle t \rangle ^{-\al})
\eeq
where $\theta_t$ denotes a smoothed characteristic function of the set
$\{x\in\bbR^3\mid |x|\leq vt\}$ with some $v<1$. Moreover, if
$\theta_t$ is replaced by $\theta_{t_c}$ then \eqref{key3} holds
uniformly in $t_c$ with $|\lambda|^{-2} \leq t_c\leq t$.
This propagation bound and soft-photon bound were established in
\cite{deroeckkupiainenpropagation} by a slight variation of the
cluster expansion mentioned above (in the case of the soft-photon
bound, this adjustment was even unnecessary and one could have simply
copied the treatment of \cite{deroeckkupiainenphotonbound}, as was
also remarked in \cite{faupinsigalcommentnumber}) 

Our basic strategy for proving asymptotic completeness is the usual one from time-dependent
scattering theory \cite{SigalSoffer1987,Graf1990}. In the present
context this means that we construct a suitable (right-)inverse $Z$
of the wave operator $W_{+}$ define by
$$
   W_{+} a^{*}(f_1)\ldots  a^{*}(f_m)\Omega = a_{+}^{*}(f_1)\ldots  a_{+}^{*}(f_m)\Psi_{gs}.
$$
The identity $W_{+}Z=\lone$ in $\scrH$ shows that $W_{+}$ has full range, which
is equivalent to asymptotic completeness as explained above. Our tools, however, are
certainly not the usual ones: the properties \eqref{key1}-\eqref{key3} augmented by
a strong form of the local relaxation
\eqref{key1}, see Proposition~\ref{prop: fate of atom}, are enough
for establishing both the existence of $Z$ and the identity
$W_{+}Z=\lone$. Important standard tools from scattering theory, such as
propagation estimates and Mourre estimates, are not needed in the present work.

We conclude this introduction with a discussion of previous work on Rayleigh scattering and asymptotic
completeness (AC) in models similar to ours.
In '97 Spohn considered an  electron that is bound by a perturbed
harmonic potential and coupled to the quantized radiation field in
dipole approximation \cite{Spohn1997}.
Knowing AC for the harmonically bound particle, a result due to Arai \cite{Arai1983},
he concludes AC in his perturbed
model by summing a Dyson series for the inverse wave operator. This is
the first proof of AC in a non-solvable model with massless bosons.
AC in more general models of a bound particle coupled to massive
bosons was established by \cite{DG1999} and by \cite{FGSch2002}. These works adapt the
methods from many-body quantum scattering, i.e. Mourre estimates and
propagation estimates, to non-relativistic QFT.
Consequently the Mourre estimate for massless bosons was a main
concern of subsequent work \cite{Skibsted1998, GGM, FGSig1}. For
interesting partial results on massless boson scattering assuming a
Mourre estimate we refer to \cite{Gerard2002}. 

Most recently, in \cite{faupinsigalhuygens, faupinsigalcommentnumber},  Faupin and Sigal succeeded in establishing AC for a
large class of models of atoms interacting with quantized radiation.
They assume a
time-independent  bound on   $ \langle\Psi_{t}, \d\Ga (1/\str k \str)^2\Psi_{t}\rangle$.  Based on this assumption they
establish AC for a large variety of models including non-relativistic
QED and spin-boson models.
Following the strategy in \cite{deroeckkupiainenphotonbound},  a bound on $ \langle\Psi_{t}, \d\Ga (1/\str k \str)^2\Psi_{t}\rangle$ can be obtained in the same way as a bound on  $\langle\Psi_{t}, N^2  \Psi_{t}\rangle$ upon slightly strenghtening the infrared assumption, as is remarked in \cite{faupinsigalcommentnumber}.
For a certain class of spin-boson models (similar to ours), Faupin and Sigal have
thus proven AC prior to us.
However, given the methods and results from
\cite{deroeckkupiainenphotonbound}, which are needed by
them as well, we believe our approach to AC for spin-boson
models is simpler.

\subsubsection*{Acknowledgements}
M.G. thanks I.M. Sigal for explanations on his work with Faupin.  W.D.R. acknowledges the support of the DFG and  A.K. is supported by the ERC and the Academy of Finland.

\section{Assumptions and Results}
 \label{sec: model and results}

We now describe the class of spin-boson models considered in this paper along with all
assumptions. It is instructive to do this in $d$ rather than 3 space
dimension, although we shall later confine ourselves to 3 dimensions for simplicity.
This section also serves us for collecting frequently used notations.

\subsection{Notations}

The symmetric Fock-space over a one-particle space $\frh$ is denoted
by $\Ga(\frh)$ which is defined by 
\beq
   \Ga(\frh)  =\mathop{\oplus}\limits_{n=0}^\infty  P_{S} \frh^{\otimes_{n}} 
\eeq
with $P_S$ is the projection to symmetric tensors and $
\frh^{\otimes_0}\equiv \bbC $. Hence $\scrH_\res = \Ga(\frh)$ with $\frh=L^2(\bbR^d)$.
We define the finite-photon space
\beq
\caD_{fin}:= \cup_{n \in \bbN} \lone_{N\leq n} \scrH_\res,
\eeq
and we write also $\caD_{fin}$ for $\scrH_\sys \otimes \caD_{fin}$.

By $a^{*}(f)$ and $a(f)$ we denote the smeared creation and
annihilation operators in $ \scrH_\res$ that are related to $a^*_{k}$
and $a_{k}$ by 
$$
      a^{*}(f) = \int f(k) a^{*}_{k}\, dk,\qquad
      a(f) = \int \overline{f(k)} a_{k}\, dk.
$$
Their sum is the self-adjoint field operator $\Phi(f) =  a^{*}(f) +
a(f)$, which generates the Weyl-operator
\beq 
\caW(f) = \e^{\i \Phi(f)}.
\eeq
Our assumptions on form factor and initial states are conveniently expressed
in terms of the following spaces:
\begin{definition} For $0<\al<1$ the subspace $\frh_{ \al} \subset \frh$
consists of all $\psi \in \frh$  such that $\hat\psi\in  C^3(\bbR^d\setminus\{0\})$
 has compact support, and, for all multi-indices $m$ with $|m|\leq 3$,
\beq
\label{betadecay}
|\partial^m_k\hat\psi(k)|\leq C|k|^{(\beta-d+2)/2-|m|}
\eeq
for  some $\be>\al$ and $C <\infty$. By an easy application of Lemma A.1 of \cite{deroeckkupiainenpropagation},  any $\psi \in \frh_\al$ satisfies
\beq
\str \psi(x)\str \leq C (1+\str x \str)^{-\min\{ \frac{d+\beta+2}{2},3\}}
\eeq
for some $\beta >\al$.

The subspace $\caD_\al \subset \scrH$ is defined by 
\beq 
\caD_\al :=  \mathrm{Span} \{ \psi_\sys \otimes \caW(f) \Om \mid \psi_s\in \scrH_S,\ f\in \frh_{\al} \}.
\eeq 
It is dense in $ \scrH$ because $\frh_{ \al}$ is dense in $\frh$.
\end{definition}

The following notation is very convenient and often used in this paper:
If $\Psi\in \scrH =  \scrH_a\otimes  \scrH_b$ and $\gamma\in \scrH_a$,
then $\Psi_\gamma=\langle \gamma\str \Psi \in  \scrH_b$ 
is defined by 
$$
\eta \otimes \langle \gamma\str \Psi = \big(|\eta\rangle\langle\gamma|\otimes \lone\big)\Psi, \quad \text{for}\ \eta\in \scrH_a.
$$
It follows that $\langle \eta\otimes\Psi_\gamma, \Phi\rangle = \langle\Psi_\gamma,\Phi_\eta\rangle$ and 
$\|\langle \gamma\str \Psi  \|^2 = \langle\Psi,(|\gamma\rangle\langle\gamma|\otimes \lone)\Psi\rangle$.

\subsection{Model and Assumptions}

Recall from the introduction that the system we consider is described
by a Hamiltonian $H$ on $\scrH = \scrH_\sys \otimes \scrH_\res$
that is of the form
\beq \label{def: tot ham}
    H= H_\sys \otimes \lone+ \lone \otimes H_\res + H_{\inter}.
\eeq
The free field operator $H_\res$ and the interaction operator $H_{\inter}$ are given by  
\begin{align*}
    H_\res &= \d\Gamma(\omega)= \int_{\bbR^d} \omega(k) a^*_{k} a_{k}\d k,\qquad \omega(k)=|k|,\\ 
    H_{\inter} &= \la D \otimes \Phi(\phi). 
\end{align*}

In the following we describe our assumptions on the matrices $H_\sys$, $D$ and the
form factor $\phi$. These choices and assumptions are assumed to hold throughout the article with the exception of the present Section \ref{sec: model and results}, where we will state some earlier results that do not require these strong assumptions.
The infrared (small Fourier mode $k$) behavior of the form factor determines temporal correlations in the
model and some regularity near $k=0$ is needed:

\begin{assumption}[$\al$-Infrared regularity] \label{ass: infrared behaviour}
The form factor $\phi$  is in $\frh_\al$ and  the dimension $d=3$.
\end{assumption}

Of course, the restriction to $d=3$ is not really necessary provided one modifies slightly the infrared assumption but we prefer to keep it for simplicity.
Our second assumption ensures that the coupling is effective:
 \begin{assumption}[Fermi Golden Rule] \label{ass: fermi golden rule}
We assume that  the spectrum of $H_\sys$ is non-degenerate (all eigenvalues are simple) and we let $e_0 := \min \sigma( H_\sys)$ (atomic ground state energy). Most importantly, we assume that for any  eigenvalue $e \in \sigma( H_\sys), e \neq e_0$, there is a set $\{e_i\}_{i=1}^n$ of eigenvalues
such that 
\beq
e= e_n >  e_{n-1} > \ldots >  e_1> e_0,  
\eeq
and for all $ i =0,\ldots, n-1$,
\beq
 \Tr[P_i DP_{i+1}DP_i] \int_{\bbR^d}  \d k  \, \delta(\str k \str- e_i-e_{i+1})\str \hat \phi(k)\str^2 > 0. \label{eqexpression jump rates}
\eeq
Here $P_{i}$ denotes the projection onto the eigenspace associated
with $e_i$. The right hand side is well-defined since $\hat\phi$ is continuous.
\end{assumption}

 
\subsection{Results}

The following results on the ground state and its properties are prerequisites for 
our definition of the wave operator.

\bet[\cite{BFS1998,gerardgroundstate}]
Let the form factor $\phi$ satisfy  $\phi/\omega\in \frh$. Then the  ground-state energy 
\beq
E_{\gs} = \inf_{\|\Psi\|=1} \langle \Psi,  H \Psi \rangle
\eeq
is finite; $E_{\gs} \leq \inf H_\sys$, and there is a  vector $\Psi_{\gs} \in \scrH$ such that 
\beq
 H \Psi_{\gs}=  E_{\gs} \Psi_{\gs} 
\eeq
If the coupling constant $\str \la \str$ is sufficiently small, then such a $\Psi_{\gs}$ is unique (up to scalar multiplication). 
\eet

The uniqueness of the ground state for small  $\str \la \str$ follows
from the simplicity of the eigenvalue  $e_0 := \min \sigma( H_\sys)$
by an overlap argument that is probably due to \cite{BFS1998}.
 
Additionally, we need an exponential localization property of photons, namely:

\begin{lemma}\label{lem: exp phonon loc}
Assume Assumptions \ref{ass: infrared behaviour} and \ref{ass: fermi golden rule} and let $\str\la\str$ be sufficiently small. Then, for some $\ka>0$,  
\beq
\langle \Psi_{\gs}, \e^{\ka N} \Psi_{\gs} \rangle < \infty.
\eeq
\end{lemma}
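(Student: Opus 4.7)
The plan is to deduce the exponential photon-number bound on $\Psi_{\gs}$ directly from the time-uniform soft-photon bound \eqref{key2}, by Ces\`aro-averaging the orbit of a suitably chosen reference state and projecting onto the ground-state eigenspace.

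First I would pick a finite-photon reference vector $\Psi_0 = \psi_\sys \otimes \Om$ with $\psi_\sys \in \scrH_\sys$ such that $c_0 := \langle \Psi_{\gs}, \Psi_0\rangle \neq 0$. Such a $\psi_\sys$ exists for $\str\la\str$ small since standard perturbation theory gives $\Psi_{\gs} = \psi_\sys^{(0)}\otimes\Om + \caO(\la)$, where $\psi_\sys^{(0)}$ is an atomic ground state; in particular the orthogonal projection of $\Psi_{\gs}$ onto the zero-photon subspace $\scrH_\sys\otimes\bbC\Om$ does not vanish. Because $\Psi_0$ has vanishing photon number, \eqref{key2} applies and yields constants $\ka>0$ and $C<\infty$ with
\begin{equation}\label{eq:sptbplan}
\|e^{\ka N/2}\,\Psi_t\|\ \leq\ C,\qquad \Psi_t := e^{-\i H t}\Psi_0,\quad t\in\bbR.
\end{equation}

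Next, set
\begin{equation}\label{eq:cesaroplan}
\Phi_T\ :=\ \frac{1}{T}\int_0^T e^{\i E_{\gs} t}\,\Psi_t\,\d t .
\end{equation}
By the von Neumann mean ergodic theorem for the unitary group $\{e^{-\i(H-E_{\gs})t}\}_t$, $\Phi_T$ converges strongly to the orthogonal projection of $\Psi_0$ onto the $E_{\gs}$-eigenspace, which by simplicity of $E_{\gs}$ equals $c_0\,\Psi_{\gs}$. For every $M>0$ the operator $B_M := \lone_{N\leq M}\,e^{\ka N/2}$ is bounded, and the triangle inequality applied to \eqref{eq:cesaroplan} together with \eqref{eq:sptbplan} yields, uniformly in $T$ and $M$,
\begin{equation}
\|B_M \Phi_T\|\ \leq\ \frac{1}{T}\int_0^T \|B_M \Psi_t\|\,\d t\ \leq\ \sup_t\|e^{\ka N/2}\Psi_t\|\ \leq\ C.
\end{equation}
Since $B_M$ is bounded and $\Phi_T \to c_0\Psi_{\gs}$ strongly, this gives $\|B_M\Psi_{\gs}\| \leq C/\str c_0\str$. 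Monotone convergence as $M\to\infty$ then produces the desired bound $\langle\Psi_{\gs}, e^{\ka N}\Psi_{\gs}\rangle \leq (C/\str c_0\str)^2<\infty$.

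The only genuine input is the prior soft-photon bound \eqref{key2} of \cite{deroeckkupiainenphotonbound}; everything else is a soft operator-theoretic extraction and I do not anticipate a real obstacle. The one delicate point is the unboundedness of $e^{\ka N/2}$, which forces the bounded truncation $B_M$ followed by monotone convergence rather than a direct passage to the limit in $T$ on unbounded quadratic forms.
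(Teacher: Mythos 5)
Your argument is correct, and it takes a genuinely different route from the paper's. The paper proves this lemma by contradiction at the end of Section~\ref{sec: strong local relaxation}: if $\norm \lone_{N=n}\Psi_{\gs}\norm$ decayed slower than every exponential, one could pick $n$ and $r$ so that the bounded local observable $O=\lone_{N=n}\Ga(1_{B(r)})$ has a ground-state expectation exceeding the time-uniform bound on $\langle\Psi_t,O\Psi_t\rangle$ furnished by the photon bound (Lemma~\ref{lem: bound on number operator}); the strong local relaxation $\langle\Psi_t,O\Psi_t\rangle\to\langle\Psi_{\gs},O\Psi_{\gs}\rangle$ (Proposition~\ref{prop: fate of atom}) then gives the contradiction. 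You instead Ces\`aro-average the orbit of a zero-photon state and use the mean ergodic theorem, the simplicity of $E_{\gs}$, and the nonvanishing overlap $\langle\Psi_{\gs},\psi_\sys\otimes\Om\rangle\neq 0$ --- which for small $\str\la\str$ is exactly the overlap estimate the paper already invokes for uniqueness of the ground state, and $\psi_\sys\otimes\Om\in\caD_\al$ (take $f=0$), so Lemma~\ref{lem: bound on number operator} does apply to your reference orbit. Both proofs transfer the time-uniform bound on the orbit to the ground state; yours does so by spectral projection, the paper's by the compactness/relaxation argument. Your route buys independence from the relaxation machinery (it needs only the photon bound plus soft spectral theory, so it could be placed before Section~\ref{sec: strong local relaxation}), while the paper's version is essentially free given Proposition~\ref{prop: fate of atom}, which is needed elsewhere anyway. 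Your handling of the unboundedness of $\e^{\ka N/2}$ via the truncations $B_M$ and monotone convergence is the right fix and closes the only delicate point.
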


This property is proven in this paper, at the end of Section~\ref{sec: strong  local relaxation}.
Earlier results, exhibiting somewhat weaker
localization, can be found in \cite{Hiroshima2003}. 

To describe scattering, let us introduce the standard identification operator $I: \scrH_\res \to \scrH$
\beq \label{def: identification i}
I a^{\ast}(f_1) a^{\ast}(f_2)  \ldots  a^{\ast}(f_m)  \Om : =  a^{\ast}(f_1) a^{\ast}(f_2)  \ldots  a^{\ast}(f_m)  \Psi_{\gs}
\eeq
for $f_1, \ldots, f_m $ such that $(1+ \str k \str^{-1})f_j \in \frh$, and then extended to a closed operator, see e.g. \cite{HS1995}. Note that the adjoint of 
$I$ is densely defined and hence $I$ is closable. 
The well-known Cook's  argument yields the following standard result:  

\begin{theorem}\label{thm: w op}
The wave operators
\beq \label{eq: limit of w operator}
W_{\pm}\Psi := \lim_{t\to\infty} W_{t}\Psi, \qquad     W_{t}\Psi = \e^{\mp \i t H} I  \e^{\pm \i t (E_{\gs}+H_\res)} \Psi
\eeq
exist for $\Psi $ in a dense subset of $\caD_{fin}$. They satisfy $\norm W_{\pm}\Psi \norm= \norm \Psi \norm $  and therefore $W_{\pm}$ extends to an isometry $\scrH_\res \to \scrH$.
\end{theorem}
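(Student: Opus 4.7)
The plan is to apply Cook's method on the dense subspace spanned by vectors $\Psi = a^*(f_1)\cdots a^*(f_m)\Om$ with $f_j\in\frh_\al$; such $\Psi$ lie in the domain of $I$ because the $\frh_\al$-decay implies $(1+|k|^{-1})\hat f_j\in\frh$. Differentiating $W_t\Psi$ and using $H\Psi_{\gs}=E_{\gs}\Psi_{\gs}$ together with the recursive commutator $[H,a^*(f)] = a^*(\om f)+\la\langle\phi,f\rangle\,(D\otimes\lone)$, the free-field pieces cancel exactly against $I(E_{\gs}+H_\res)$ and one obtains (up to a global unitary, which is irrelevant for norms)
\begin{equation*}
\norm\tfrac{\d}{\d t}W_t\Psi\norm \leq |\la|\,\norm D\norm\,\sum_{j=1}^m|\langle\phi,f_{j,t}\rangle|\cdot\bigl\| a^*(f_{1,t})\cdots\widehat{a^*(f_{j,t})}\cdots a^*(f_{m,t})\Psi_{\gs}\bigr\|,
\end{equation*}
with $f_{j,t}:=\e^{\pm\i t\om}f_j$ and the hat denoting omission.

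The reduced creation-operator norms are bounded uniformly in $t$ by $C\prod_{i\neq j}\norm f_i\norm$ via $\|a^*(g)\Phi\|\leq\|g\|(\|N^{1/2}\Phi\|+\|\Phi\|)$, thanks to the moment bound $\langle\Psi_{\gs},N^k\Psi_{\gs}\rangle<\infty$ for every $k$ provided by Lemma~\ref{lem: exp phonon loc}. Passing the remaining oscillatory integral to polar coordinates gives
\begin{equation*}
\langle\phi,f_{j,t}\rangle = \int_0^\infty g_j(r)\,\e^{\pm\i tr}\,\d r,\qquad g_j(r):=r^2\int_{S^2}\overline{\hat\phi(r\om)}\hat f_j(r\om)\,\d\om,
\end{equation*}
where the $\frh_\al$ assumption ensures that $g_j$ is compactly supported and $C^3$ on $(0,\infty)$, with $g_j(0)=g_j'(0)=0$ and $g_j''\in L^1([0,\infty))$. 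Two integrations by parts then yield $|\langle\phi,f_{j,t}\rangle|=\caO(t^{-2})$, so $t\mapsto\frac{\d}{\d t}W_t\Psi$ is Bochner-integrable on $[1,\infty)$ and the limit $W_{\pm}\Psi$ exists by the Cauchy criterion.

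For the isometry I expand via the CCR,
\begin{equation*}
\norm W_t\Psi\norm^2 = \bigl\langle\Psi_{\gs},\, a(f_{m,t})\cdots a(f_{1,t})\, a^*(f_{1,t})\cdots a^*(f_{m,t})\Psi_{\gs}\bigr\rangle .
\end{equation*}
The fully contracted Wick term equals $\sum_{\si\in S_m}\prod_j\langle f_{j,t},f_{\si(j),t}\rangle = \norm\Psi\norm^2$, since $\langle f_{i,t},f_{j,t}\rangle=\langle f_i,f_j\rangle$ is $t$-independent. Every remaining Wick term contains a factor $a(f_{j,t})\Psi_{\gs}$, which I control by the pull-through formula $a_k\Psi_{\gs}=-\la\hat\phi(k)(H-E_{\gs}+|k|)^{-1}(D\otimes\lone)\Psi_{\gs}$:
\begin{equation*}
a(f_{j,t})\Psi_{\gs} = -\la\int\overline{\hat f_j(k)}\,\e^{\mp\i t|k|}\,\hat\phi(k)(H-E_{\gs}+|k|)^{-1}(D\otimes\lone)\Psi_{\gs}\,\d k .
\end{equation*}
The $\scrH$-valued integrand is $L^1$ in $k$ (using $\norm(H-E_{\gs}+|k|)^{-1}\norm\leq|k|^{-1}$ together with $\hat\phi/\om\in\frh$) and radially smooth on $(0,\infty)$, so the $\scrH$-valued Riemann--Lebesgue lemma in polar coordinates yields $\norm a(f_{j,t})\Psi_{\gs}\norm\to 0$; iterating the pull-through handles higher normal-ordered Wick terms analogously. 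Therefore $\norm W_t\Psi\norm^2\to\norm\Psi\norm^2$, and $W_{\pm}$ extends by continuity to an isometry $\scrH_\res\to\scrH$.

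The main obstacle is executing the oscillatory-integral estimates cleanly at the vector-valued level with resolvent factors; both the $\caO(t^{-2})$ decay of $\langle\phi,f_{j,t}\rangle$ and the Riemann--Lebesgue vanishing of $a(f_{j,t})\Psi_{\gs}$ rest on the joint smoothness and decay $|\partial^m_k\hat\phi(k)|\leq C|k|^{(\beta-1)/2-|m|}$ granted by Assumption~\ref{ass: infrared behaviour}, combined with the ground-state resolvent bound $\norm(H-E_{\gs}+|k|)^{-1}\norm\leq|k|^{-1}$.
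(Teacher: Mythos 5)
Your proposal is correct and follows exactly the route the paper indicates: the paper gives no written proof of Theorem~\ref{thm: w op}, stating only that ``the well-known Cook's argument yields'' it and citing the literature, and your argument is a sound instantiation of that Cook's-method proof (the commutator cancellation, the $\caO(t^{-2})$ stationary-phase decay of $\langle\phi,f_{j,t}\rangle$ from the $\frh_\al$ regularity, and the pull-through plus Riemann--Lebesgue treatment of the non-contracted Wick terms for the isometry all check out). The only caveats are routine domain bookkeeping and the observation that invoking Lemma~\ref{lem: exp phonon loc} for the ground-state $N$-moments is harmless but stronger than needed, since finite polynomial moments already follow from the pull-through formula.
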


Existence of  wave operators for a large class of models with massive and massless photons is
established in \cite{HK1969,DG1999,FGSch2001, GZ2009}.

Our main result complements the former statement by establishing
\bet[Asymptotic completeness] \label{thm: ac}
Assume Assumptions \ref{ass: infrared behaviour} and \ref{ass: fermi golden rule}. Then there is a $\la_0>0$ such that for any coupling strength $\la$ satisfying $0 < \str \la \str \leq \la_0$;
\beq
\Ran W_{\pm}  = \scrH.
\eeq 
\eet

\begin{remark} We will only discuss  $W_+$. To treat $W_-$, it suffices to define a time-reversal operator (antilinear involution)
\beq
\Theta= \Theta_\sys \otimes \Theta_\res
\eeq
where $\Theta_\sys$ is complex conjugation in a basis diagonalizing $H_\sys$, and $\Theta_\res= \Gamma(\theta_{\res})$ with  $\theta_{\res}\psi(q)= \overline{\psi(-q)}$.    Then we check that, if $H$ satisfies our assumptions, then $\Theta H \Theta$ satisfies them as well. 
\end{remark}


\section{Technical tools} \label{sec: technical tools}

In this section, we list all the tools that we use which are not proven in the present paper, or only in the appendix. 
By smooth indicators $\theta$, we mean spherically symmetric functions
in $C^{\infty}_0(\bbR^d) $, i.e.\  $\theta(x)=\theta(\str x \str)$,
with $0 \leq \theta \leq 1$,  and we write $\theta_t(x)=\theta(x/t)$. All the following lemmata hold for sufficiently small $\la$, uniformly
in $\la$. From now on, we always assume that  Assumptions \ref{ass:
  infrared behaviour} and \ref{ass: fermi golden rule} hold. Moreover,
we write throughout this section 
$$
      \Psi_t= \e^{-\i t H}\Psi_0\quad\text{ with}\quad  \Psi_0 \in \caD_\al. 
$$
The first tool is the weak relaxation to the ground state.
\begin{lemma}[Weak relaxation \cite{deroeckkupiainenphotonbound}]\label{lem: weak relaxation}
Let the $C^*$-algebra $\caW_{\al}$ be generated by $S \in \caB(\scrH_\sys)$ and $\caW(\psi), \psi \in \frh_{\al}$.  For any $O \in \caW_{\al}$, we have
\beq 
\lim_{t  \to \infty}  \langle \Psi_t, O\Psi_t \rangle =    \langle \Psi_{\gs},  O \Psi_{\gs} \rangle 
\eeq
\end{lemma}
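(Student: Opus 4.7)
My plan is to prove this by a polymer-type cluster expansion of the correlation function, exploiting the analogy with one-dimensional statistical mechanics that the introduction emphasizes. By the standard $C^*$-algebraic approximation argument together with sesquilinearity, it suffices to establish the claim for $O$ of the form $S\otimes\caW(\psi)$ with $S\in\caB(\scrH_\sys)$ and $\psi\in\frh_{\al}$. Writing $\Psi_0\in\caD_\al$ as a finite combination of vectors $\chi\otimes\caW(g)\Om$ with $g\in\frh_{\al}$, the matrix element $\langle\Psi_t,O\Psi_t\rangle$ becomes, after pulling the Weyl displacements through to the bare vacuum, a time-ordered integral of atomic operators dressed by a Gaussian integration over the field modes.

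Next I would carry out a Dyson expansion in the coupling $\la$. Each term becomes a multiple time integral, and after Wick contraction of the bosonic creation/annihilation operators, the surviving contributions factor into products of the free two-point function $t\mapsto \int|\hat\phi(k)|^2 e^{-\i|k|t}\,\d k$ and atomic matrix elements along the time ordering. Assumption~\ref{ass: infrared behaviour} with $\al>0$ ensures polynomial, in fact integrable, decay of this correlator; this plays the role of the decay of pair potentials in a 1D gas. Grouping times into maximal connected components under bath contractions yields a polymer representation for $\langle\Psi_t,O\Psi_t\rangle$.

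Convergence of this polymer expansion, uniformly in $t$, would then be verified through a Koteck\'y--Preiss criterion. The small parameter is $\la^2$ times the $L^1$-norm of the bath correlator, which is finite thanks to $\al>0$, combined with the exponential contraction of the atomic sector provided by Assumption~\ref{ass: fermi golden rule}: the leading effective generator on the atomic density matrix (a Davies-type Lindbladian) is gapped and has $P_0$ as its unique invariant state, which controls the cluster weights at the time scale $\la^{-2}$. Comparing the expansion for $\langle\Psi_t,O\Psi_t\rangle$ with the analogous bi-infinite expansion for $\langle\Psi_{\gs},O\Psi_{\gs}\rangle$, the difference is made of clusters that touch the boundary times $\{0,t\}$, and by cluster decay it vanishes as $t\to\infty$.

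The main obstacle, in my view, will be proving the convergence of the cluster expansion uniformly in time with only polynomial bath-correlator decay; the combinatorics of cluster configurations must be balanced tightly against the $L^1$-smallness coming from $\al>0$ and from small $\la$, preventing the accumulation of arbitrarily large clusters at intermediate times. A secondary difficulty is to identify the limit precisely as $\langle\Psi_{\gs},O\Psi_{\gs}\rangle$ rather than as some other invariant functional, which is exactly the role played by the non-degeneracy of $H_\sys$ and by the Fermi Golden Rule condition.
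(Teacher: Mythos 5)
The first thing to say is that this paper does not prove the lemma at all: it appears in Section~\ref{sec: technical tools} among the tools that are ``not proven in the present paper'' and is imported wholesale from \cite{deroeckkupiainenphotonbound}. Your outline is therefore being compared not with a proof in this paper but with the strategy of that companion reference, which the introduction describes in essentially the terms you use (a cluster expansion exploiting the analogy with one-dimensional statistical mechanics). At that level your plan is the right one, and you correctly identify the two essential inputs: integrable-in-time decay of the bath correlator, coming from the infrared condition $\al>0$, and the gap of the Davies-type effective generator, coming from the Fermi Golden Rule, which contracts the atomic sector on the time scale $\la^{-2}$ and singles out the limiting state.

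As a proof, however, what you have written is a program rather than an argument, and the genuine gap sits exactly where you place your ``main obstacle''. A term-by-term Dyson expansion in $\la$ does not converge uniformly in $t$: the naive bound on the $n$-th order term grows like a power of $\la^2 t$ and is useless for $t\gg\la^{-2}$. The entire content of \cite{deroeckkupiainenphotonbound} is the reorganization of the series so that the ``free'' propagation between polymers is the gapped dissipative semigroup rather than the bare atomic evolution, followed by the verification of a Koteck\'y--Preiss-type criterion with only \emph{polynomial} (not exponential) decay of the pair interactions; none of these estimates is carried out in your proposal. Likewise, the identification of the $t\to\infty$ limit with $\langle\Psi_{\gs},O\Psi_{\gs}\rangle$ for the \emph{interacting} ground state, rather than with some other invariant functional, requires a separate argument that you only gesture at. In short, your proposal reproduces the outline of the cited paper but supplies none of its analysis, so it cannot be accepted as a proof of the lemma.
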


We will also need the time-independent bound on the total number of photons.

\begin{lemma}[Photon bound \cite{deroeckkupiainenphotonbound}]\label{lem: bound on number operator}
There is a $\ka>0$ such that 
\beq 
\sup_{t }  \langle \Psi_t, \e^{\ka N} \Psi_t \rangle  \leq  C <\infty
\eeq
with $C$ depending on $\Psi_0 \in\caD_\al$ and $\ka$.
\end{lemma}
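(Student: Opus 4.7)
My plan is to bound $Z_t := \langle \Psi_t, e^{\kappa N}\Psi_t\rangle$ uniformly in $t$ by representing it as a convergent polymer expansion along the time axis $[0,t]$, building on the intuition articulated in the introduction that massless boson dynamics at weak coupling behaves like a one-dimensional statistical mechanics model with summable (though long-ranged) interactions, once the infrared regularity of $\phi$ is invoked.

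First, I would move to the interaction picture, writing
$$
e^{-\i t H} = e^{-\i t H_0}\,\mathcal{T}\exp\Bigl(-\i\int_0^t V(s)\,ds\Bigr),\qquad V(s) = \la\, e^{\i s H_0}(D\otimes\Phi(\phi))e^{-\i s H_0},
$$
and expand both the bra and ket evolutions in Dyson series. Since $e^{\kappa N} = \Ga(e^{\kappa})$ acts diagonally on Fock space, an application of Wick's theorem decomposes $Z_t$ into a sum over pairings of the field operators into contractions of three types: internal bra-bra, internal ket-ket, and contractions that cross the central $e^{\kappa N}$ insertion; each uncontracted creation operator that gets absorbed into $e^{\kappa N}$ yields a factor $e^{\kappa}-1$, and it is these factors that must be controlled.

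Second, each contraction would be estimated by the free two-point function $W(s-s')=\int e^{-\i\omega(k)(s-s')}|\hat\phi(k)|^2 dk$, which, by the $\alpha$-infrared assumption and stationary phase, decays in $|s-s'|$ at an integrable rate. The atomic transfer between successive vertex times is then controlled using the Fermi Golden Rule (Assumption \ref{ass: fermi golden rule}): the effective atomic evolution has a simple eigenvalue $1$ on the ground state, isolated by a gap of order $\la^2$ from the rest of the spectrum, and this gap provides exponential decay of atomic correlators along the time axis. Combining integrability of $W$ with smallness of $\la$ and this exponential decay, one organizes the contraction graph into connected polymers and verifies a Kotecky--Preiss-type criterion, yielding absolute convergence of the expansion uniformly in $t$, provided $\kappa$ is chosen small enough that the weight $(e^{\kappa}-1)$ from each uncontracted vertex is dominated by the polymer activity.

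The principal technical obstacle is precisely the massless nature of the field: naively, emissions of soft photons would accumulate indefinitely as $t\to\infty$, threatening the uniform-in-$t$ bound. The saving features are (i) the $\alpha$-infrared regularity, which makes the two-point function decay integrably rather than logarithmically, and (ii) the $\la^2$ spectral gap from the Fermi Golden Rule, which causes the atom to decohere on time scales $\sim\la^{-2}$ so that on longer scales the system behaves effectively as if gapped. Translating this heuristic into a rigorous cluster expansion whose total weight is bounded independently of $t$ across all scales from $1$ to $\lambda^{-2}$ to arbitrary $t$ is the technically demanding step, and is carried out in \cite{deroeckkupiainenphotonbound} in the form that I would invoke here.
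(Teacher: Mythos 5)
The paper does not prove this lemma at all: it is imported verbatim from \cite{deroeckkupiainenphotonbound} as one of the ``technical tools'' of Section~\ref{sec: technical tools}, and your proposal likewise ends by invoking that reference, so you are taking essentially the same route. Your accompanying sketch of the cited proof (Dyson/Wick expansion organized as a one-dimensional polymer gas along the time axis, with integrable decay of the boson two-point function from the $\al$-infrared regularity and a $\la^2$ gap from the Fermi Golden Rule feeding a Kotecky--Preiss criterion, with $\ka$ small enough that the $e^{\ka}-1$ weights are absorbed) is consistent with the description the present paper gives of that cluster expansion.
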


The following result describes the localization of photons in the
ground state.

\begin{lemma}[Localization of ground state photons] \label{lem: localization ground state}
Let $\theta$ be a smooth indicator such that $ \supp \theta  \subset \{ 0< \str x \str\}$. Then 
\beq 
  \langle\Psi_{\gs}, \d \Ga (\theta_t) \Psi_{\gs}\rangle =\caO( \langle t \rangle ^{-\al})
\eeq
\end{lemma}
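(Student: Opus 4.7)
The plan is to combine the pull-through formula with the $\al$-infrared regularity of $\hat\phi$ to extract pointwise decay of the position-space photon density in the ground state, and then integrate that decay against $\theta_t$. Writing $a(x)$ for the position-space annihilation operator on $\scrH$, one has
\[
\langle\Psi_{\gs},\, \d\Ga(\theta_t)\Psi_{\gs}\rangle \;=\; \int_{\bbR^3}\theta_t(x)\,\norm a(x)\Psi_{\gs}\norm^2\,\d x,
\]
and since the support of $\theta_t$ lies in $\{c_1 t\leq \str x\str\leq c_2 t\}$ for some $0<c_1<c_2<\infty$, the problem reduces to a pointwise estimate on $\norm a(x)\Psi_{\gs}\norm$ for $\str x\str\gtrsim t$.

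The key identity would be the pull-through formula, obtained from $[a_k,H]=\str k\str a_k + \la\hat\phi(k)D$ applied to $\Psi_{\gs}$ together with $(H-E_{\gs})\Psi_{\gs}=0$:
\[
a_k\Psi_{\gs} \;=\; -\la\,\hat\phi(k)\,R(k)\,D\Psi_{\gs},\qquad R(k):=(H-E_{\gs}+\str k\str)^{-1}.
\]
Hence $a(x)\Psi_{\gs}$ is, up to a sign, the inverse Fourier transform in $k$ of the $\scrH$-valued map $G(k):=\hat\phi(k)R(k)D\Psi_{\gs}$. Because $H-E_{\gs}\geq 0$, one has $\norm R(k)\norm\leq\str k\str^{-1}$, and since $\partial_{k_j}R(k)=-R(k)^2(k_j/\str k\str)$, an induction yields $\norm\partial^m_k R(k)\norm\leq C_m\str k\str^{-1-\str m\str}$ for $\str m\str\leq 3$, $k\neq 0$. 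Inserting the $\al$-regularity bound $\str\partial^m_k\hat\phi(k)\str\leq C\str k\str^{(\be-1)/2-\str m\str}$ from Assumption~\ref{ass: infrared behaviour} (with some $\be>\al$) and using Leibniz, one arrives at
\[
\norm\partial^m_k G(k)\norm \;\leq\; C\la\,\str k\str^{(\be-3)/2-\str m\str},\qquad \str m\str\leq 3,
\]
uniformly in small $\la$, while $G$ inherits compact support from $\hat\phi$.

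A standard Fourier-decay argument would then convert the $k$-regularity into $x$-decay. Splitting the integral for $\check G(x)$ at the scale $\str k\str\sim 1/\str x\str$, the inner piece is bounded directly by $\int_{\str k\str<1/\str x\str}\str k\str^{(\be-3)/2}\d k\sim\str x\str^{-(\be+3)/2}$ (locally integrable since $\be>0$), while for the outer piece three integrations by parts supply a factor $\str x\str^{-3}$, yielding $\str x\str^{-3}\int_{\str k\str>1/\str x\str}\str k\str^{(\be-9)/2}\d k\sim\str x\str^{-(\be+3)/2}$. Squaring gives $\norm a(x)\Psi_{\gs}\norm^2\leq C\la^2\langle x\rangle^{-(\be+3)}$, and inserting this into the integral yields, in spherical coordinates,
\[
\langle\Psi_{\gs},\,\d\Ga(\theta_t)\Psi_{\gs}\rangle \;\leq\; C\la^2\int_{c_1 t}^{c_2 t}r^{-\be-1}\,\d r \;=\; \caO(t^{-\be}) \;=\; \caO(\langle t\rangle^{-\al}),
\]
using $\be>\al$.

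The hardest part will be the bookkeeping of the derivative estimates on the Hilbert-space-valued function $G(k)$: each differentiation of $R(k)$ produces an extra factor $\str k\str^{-1}$ together with derivatives of $\str k\str$ (which are only smooth off the origin), and one has to verify that the Leibniz expansion of $\partial^m_k(\hat\phi\cdot R)$ genuinely stays within the regularity class $\str k\str^{(\be-3)/2-\str m\str}$ for all $\str m\str\leq 3$. Once this is in place, the Fourier-decay estimate and the final spherical integration are essentially routine.
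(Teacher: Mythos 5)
Your proof is correct and follows essentially the same route as the paper's: the pull-through formula, derivative bounds $\|\partial_k^m\big(\hat\phi(k)(H-E_{\gs}+\str k\str)^{-1}D\Psi_{\gs}\big)\|\leq C\str k\str^{(\be-3)/2-\str m\str}$, and a Fourier-decay estimate in $x$. The only (cosmetic) difference is organizational: the paper bounds $\theta_t(x)\leq C t^{-\al}\str x\str^{\al}$ and reduces the claim to finiteness of the $t$-independent moment $\int \str x\str^{\al}\,\norm a(x)\Psi_{\gs}\norm^2\,\d x$, whereas you integrate the pointwise decay directly over the annulus supporting $\theta_t$; both yield the stated $\caO(\langle t\rangle^{-\al})$.
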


The next lemma bounds the number of photons after time $t$ in spatial regions around the atom. 

\begin{lemma}[Propagation bound \cite{deroeckkupiainenpropagation}] \label{lem: propagation bound}
Let $\theta$ be a smooth indicator such that $ \supp \theta  \subset \{  \str x \str < 1 \} $. Then for any time $t_c\geq \str\la\str^{-2}$ 
the limit $a(\theta, t_c):=\lim_{t\to\infty} \langle \Psi_t, \d \Ga (\theta_{t_c})\Psi_t \rangle$ exists and moreover
\beq  
  \langle \Psi_t, \d \Ga (\theta_{t_c})\Psi_t \rangle = a(\theta, t_c)  + \caO( \langle t \rangle ^{-\al})
\eeq
uniformly for $t_c \in [\str\la\str^{-2},t]$. 
\end{lemma}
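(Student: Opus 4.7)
The plan is to combine a Heisenberg-equation computation for the counting observable $\d\Ga(\theta_{t_c})$ with the cluster (``polymer'') expansion of \cite{deroeckkupiainenphotonbound} that underlies Lemmas~\ref{lem: weak relaxation} and~\ref{lem: bound on number operator}. Existence of the limit $a(\theta,t_c)$ will come from a Cauchy-type estimate on the time-derivative, while the quantitative rate $\caO(\langle t\rangle^{-\al})$, uniform in $t_c\in[|\la|^{-2},t]$, will come from the polynomial decay of polymer weights enforced by the $\al$-infrared regularity of Assumption~\ref{ass: infrared behaviour}.

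First I would write, for $t>t'\geq t_c$,
\beq
\langle \Psi_t, \d\Ga(\theta_{t_c})\Psi_t\rangle - \langle \Psi_{t'}, \d\Ga(\theta_{t_c})\Psi_{t'}\rangle = \int_{t'}^t \i\langle\Psi_s, [H,\d\Ga(\theta_{t_c})]\Psi_s\rangle \, \d s,
\eeq
and split the commutator as
\beq
[H,\d\Ga(\theta_{t_c})] = \d\Ga([\om,\theta_{t_c}]) - \la D \otimes \big(a^{*}(\theta_{t_c}\phi) - a(\theta_{t_c}\phi)\big).
\eeq
The flux term $\d\Ga([\om,\theta_{t_c}])$ has a pseudo-differential symbol supported near the sphere $\{|x|\simeq t_c\}$ with amplitude $O(1/t_c)$; combined with the speed-one free propagation and Lemma~\ref{lem: bound on number operator}, its expectation becomes integrable in $s$ for large $s$. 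The emission term's form factor $\theta_{t_c}\phi$ inherits the infrared regularity of $\phi$, and its expectation can be controlled by a slight extension of Lemma~\ref{lem: weak relaxation} combined with the photon bound of Lemma~\ref{lem: bound on number operator} that dominates the unbounded field operators. To upgrade qualitative convergence to a quantitative rate I would organize the polymer expansion of $\langle\Psi_t,\d\Ga(\theta_{t_c})\Psi_t\rangle$ according to whether the polymers touch the space-time region cut out by $\theta_{t_c}$ at time $t$: polymers active entirely within the last-$t_c$ time window contribute the candidate value $a(\theta,t_c)$, while polymers straddling the time $t-t_c$ contribute $\caO(\langle t\rangle^{-\al})$ from the polynomial tail of their weights.

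The hard part will be reconciling the cluster expansion, which is naturally phrased in terms of time variables and momentum-space form factors, with the position-space cutoff $\theta_{t_c}$: one must use the dispersion $\om(k)=|k|$ and the classical ray interpretation of free propagation to translate the constraint ``photon still inside $\{|x|<t_c\}$ at time $t$'' into constraints on polymer emission times and momenta, and then verify that all error bounds remain uniform for $t_c\in[|\la|^{-2},t]$. Once the expansion has been reorganized in this way, the convergence $\langle\Psi_t,\d\Ga(\theta_{t_c})\Psi_t\rangle\to a(\theta,t_c)$ with the claimed uniform rate follows by reading off the polymer tails.
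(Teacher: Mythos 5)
You should first be aware that the paper does not prove Lemma~\ref{lem: propagation bound} at all: it sits in Section~\ref{sec: technical tools}, which explicitly collects results imported from elsewhere, and the lemma is attributed to \cite{deroeckkupiainenpropagation}, where (as the introduction states) it is obtained ``by a slight variation of the cluster expansion'' of \cite{deroeckkupiainenphotonbound}. So there is no in-paper proof to compare against, and your high-level plan --- reorganize the polymer expansion according to which polymers can place a photon inside $\{|x|<v t_c\}$ at time $t$, with the $\langle t\rangle^{-\al}$ rate coming from the polynomial tails forced by Assumption~\ref{ass: infrared behaviour} --- is consistent with the route the authors themselves indicate.

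That said, the proposal as written has a genuine gap, and it is exactly at the point where you defer to ``the hard part.'' The Heisenberg-equation half of your argument does not close with the inputs you invoke. For the flux term, Lemma~\ref{lem: bound on dgamma} plus the photon bound give only $|\langle\Psi_s,\d\Ga([\om,\theta_{t_c}])\Psi_s\rangle|\leq C/t_c$ \emph{uniformly in} $s$, which is not integrable in $s$; to get integrability you need the photon flux through the fixed sphere $|x|\sim v t_c$ to decay in $s$, and that is essentially the content of the propagation bound itself --- the argument is circular unless the decay is supplied by the expansion. Similarly, for the emission term $\la\,D\otimes(a^{*}(\theta_{t_c}\phi)-a(\theta_{t_c}\phi))$ the photon bound gives only a uniform $O(\la)$ estimate, and Lemma~\ref{lem: weak relaxation} gives convergence of its expectation \emph{without any rate}, so the time integral is not controlled; note also that weak relaxation is itself only qualitative and cannot produce the quantitative $\caO(\langle t\rangle^{-\al})$, uniformly in $t_c\in[|\la|^{-2},t]$, that the lemma asserts. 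Everything therefore rests on the reorganized cluster expansion, i.e.\ on translating the position-space constraint ``inside $B(vt_c)$ at time $t$'' into constraints on polymer emission times and momenta and summing the resulting tails --- which is precisely the step you name but do not carry out. As a proof, the proposal is a plausible roadmap to the argument of \cite{deroeckkupiainenpropagation}, not a substitute for it.
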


In Section \ref{sec: strong local relaxation}, we will identify
\beq \label{eq: identification a}
a(\theta, t) = \langle  \Psi_{\gs}, \d \Ga (\theta_t) \Psi_{\gs}\rangle
\eeq
Note that in the case where $0 \not \in \supp \theta$, we can combine this propagation bound and \eqref{eq: identification a} with Lemma \ref{lem: localization ground state} to conclude that 
\beq \label{eq: field in transition region}
  \langle \Psi_t, \d \Ga (\theta_t)\Psi_t \rangle =  \caO( \langle t \rangle ^{-\al})
\eeq

Finally, we need a more detailed photon bound
\begin{lemma}[Soft photon bound \cite{deroeckkupiainenpropagation}]\label{lem: control on small momenta}
For any $\ep>0$, we have
\beq 
\sup_{t }  \langle \Psi_t, \d \Ga (\lone_{\str k \str \leq \ep})\Psi_t \rangle  \leq  C\ep^{\al/2} 
\eeq
with $C$ depending on $\Psi_0 \in \caD_\al$ and $\al$, but not on $\ep$.
\end{lemma}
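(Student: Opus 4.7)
\emph{Plan.} My plan is to deduce the lemma from a stronger uniform-in-$t$ bound on an infrared-weighted number operator. Setting $\om(k):=|k|$ and $N_{\al/2}:=\d\Ga(\om^{-\al/2})$, I would argue that, for $\Psi_0\in\caD_\al$ and sufficiently small $|\la|$,
\[
M \;:=\; \sup_t\,\langle\Psi_t,\,N_{\al/2}\,\Psi_t\rangle \;<\;\infty.
\]
Granting this, the lemma is immediate from the pointwise inequality $\lone_{|k|\leq\ep}(k)\leq\ep^{\al/2}|k|^{-\al/2}$, which second-quantises to $\d\Ga(\lone_{|k|\leq\ep})\leq\ep^{\al/2}N_{\al/2}$; taking the expectation in $\Psi_t$ and the supremum over $t$ gives $\sup_t\langle\Psi_t,\d\Ga(\lone_{|k|\leq\ep})\Psi_t\rangle\leq M\ep^{\al/2}$, as required.

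\emph{Main step.} To establish $M<\infty$, I would rerun the cluster expansion that underlies Lemma~\ref{lem: bound on number operator}, with the plain number operator replaced by $N_{\al/2}$. The starting point is the pull-through identity
\[
a_k\Psi_t \;=\; e^{-it(H+|k|)}a_k\Psi_0 \,-\, i\la\hat\phi(k)\int_0^t e^{-i(t-s)(H+|k|)}D\,\Psi_s\,ds,
\]
so that $\langle\Psi_t,N_{\al/2}\Psi_t\rangle = \int|k|^{-\al/2}\|a_k\Psi_t\|^2\,dk$. The initial-condition contribution is a finite time-independent constant, because the infrared decay \eqref{betadecay} built into $\caD_\al$ makes $|k|^{-\al/2}\|a_k\Psi_0\|^2$ integrable near $k=0$ in $d=3$. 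The Duhamel contribution produces a double time integral of the interacting correlation $K(s,s'):=\langle D\Psi_{s'},\,e^{-i(s-s')H}D\Psi_s\rangle$ against the oscillating kernel $|\hat\phi(k)|^2 e^{-i|k|(s-s')}$. This is precisely the structure controlled by the polymer expansion of \cite{deroeckkupiainenphotonbound}: each interaction vertex carries a factor $|\hat\phi(k)|^2$ integrated in $k$, and the only modification needed here is to decorate one ``external'' leg with the extra weight $|k|^{-\al/2}$. This decoration is harmless, since $\int|k|^{-\al/2}|\hat\phi(k)|^2\,dk<\infty$ in $d=3$ under Assumption~\ref{ass: infrared behaviour} (the integrand behaves like $|k|^{1+\al/2}$ near the origin), so the polymer activities remain summable and the expansion converges absolutely, uniformly in $t$, provided $|\la|$ is small enough.

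\emph{Main obstacle.} The heart of the matter is controlling the double time integral of $K(s,s')$ uniformly in $t$. A naive Cauchy--Schwarz bound produces $\caO(t^2)$ and is useless; beating this requires extracting the effective decay of the interacting correlation together with the oscillations in $e^{-i|k|(s-s')}$. The cluster expansion of \cite{deroeckkupiainenphotonbound} achieves exactly this by reorganising the Dyson series as a polymer gas whose activities are summable by virtue of the infrared integrability of the form factor. Once the weight $|k|^{-\al/2}$ is inserted at the external leg, the convergence argument proceeds through the same smallness estimates as in the proof of Lemma~\ref{lem: bound on number operator} --- this is the content of the remark quoted in Section~1 that for the soft-photon bound ``this adjustment was even unnecessary and one could have simply copied the treatment'' of the earlier photon bound.
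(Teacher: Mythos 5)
This lemma is one of the imported tools of Section~\ref{sec: technical tools}: the paper gives no proof of it, but cites \cite{deroeckkupiainenpropagation} and remarks in the introduction that it follows by repeating the cluster expansion of \cite{deroeckkupiainenphotonbound}. Your reduction is sound --- the pointwise bound $\lone_{[\str k \str\leq\ep]}\leq \ep^{\al/2}\om^{-\al/2}$ together with monotonicity of $\d\Ga$ correctly converts a uniform-in-$t$ bound on $\langle\Psi_t,\d\Ga(\om^{-\al/2})\Psi_t\rangle$ into the stated estimate, and your integrability check $\int\str k\str^{-\al/2}\str\hat\phi(k)\str^2\,\d k<\infty$ is consistent with Assumption~\ref{ass: infrared behaviour} --- but the substantive step (the uniform weighted photon bound via the polymer expansion) is exactly the external machinery the paper itself defers to, so your proposal is a faithful sketch of the same route rather than an independent argument; the cited work appears to track the $\ep$-dependence directly inside the expansion rather than through a weighted number operator, which is only a difference of packaging.
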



\section{Strong local relaxation} \label{sec: strong local relaxation}

In Lemma \ref{lem: weak relaxation}, we saw that the system relaxes to the ground state in a weak sense.
The following result expresses that, locally  in space, it also relaxes  in a strong sense.
This will also allow us to identify the numbers $a(\theta, t)$ in Lemma \ref{lem: propagation bound}.

Let $B(r)\subset \bbR^3$ be the ball with radius $r$ in position space and $B^c(r)= \bbR^3 \setminus B(r)$. We will often split 
\beq
\scrH = \scrH_{B(r)} \otimes  \scrH_{B^c(r)}
\eeq
where it is understood that the small system is incorporated in the inner sphere, i.e.
\beq
 \scrH_{B(r)} := \scrH_\sys \otimes \Ga (L^2 (B(r))), \qquad   \scrH_{B^c(r)} := \Ga (L^2 (B^c(r)))
\eeq
Let us then write in general $\Tr_{B(r)}$ and $\Tr_{B^c(r)}$ for the
partial traces over $\scrH_{B(r)}$ and $\scrH_{B^c(r)} $, respectively. 

\begin{proposition}[Strong local relaxation] \label{prop: fate of atom}
Fix $r>0$. Then 
\beq 
 \lim_{t \to \infty}   \Tr_{B^c(r)}  \str \Psi_t \rangle \langle \Psi_t \str  =    \Tr_{B^c(r)}     \str \Psi_{\gs} \rangle \langle \Psi_{\gs} \str   
 \eeq
\end{proposition}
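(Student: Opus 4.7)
The statement claims trace-norm convergence of the reduced density matrix $\rho_t := \Tr_{B^c(r)}\str\Psi_t\rangle\langle\Psi_t\str$ on $\scrH_{B(r)}$ to $\rho_{\gs} := \Tr_{B^c(r)}\str\Psi_{\gs}\rangle\langle\Psi_{\gs}\str$. My plan is to invoke Gr\"umm's convergence theorem for trace-class operators: for positive trace-class $A_n, A$ with $\langle\phi, A_n\psi\rangle\to\langle\phi, A\psi\rangle$ for all $\phi,\psi$, and $\Tr A_n\to\Tr A$, one has $\norm A_n-A\norm_1\to 0$. Since $\Tr\rho_t=\Tr\rho_{\gs}=1$ is automatic, it suffices to establish
$$
\langle\Psi_t,(K\otimes \lone_{B^c(r)})\Psi_t\rangle\;\longrightarrow\;\langle\Psi_{\gs},(K\otimes \lone_{B^c(r)})\Psi_{\gs}\rangle
$$
for every compact operator $K$ on $\scrH_{B(r)}$.

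The first step is to extend Lemma~\ref{lem: weak relaxation} from $\caW_\al$ to the larger $C^*$-algebra $\widetilde{\caW}$ generated by $\caB(\scrH_\sys)$ and the Weyl operators $\caW(g)$ for \emph{every} $g\in L^2(\bbR^3)$. The crucial estimate is the standard bound
$$
\norm(\caW(g)-\caW(g'))(\lone+N)^{-1/2}\norm_{\mathrm{op}}\leq C\norm g-g'\norm,
$$
which, combined with the uniform bound $\sup_t\norm(\lone+N)^{1/2}\Psi_t\norm<\infty$ from Lemma~\ref{lem: bound on number operator}, yields uniform-in-$t$ approximation of $\langle\Psi_t,\caW(g)\Psi_t\rangle$ by $\langle\Psi_t,\caW(g_n)\Psi_t\rangle$ with $g_n\in\frh_\al$ chosen dense in $L^2$. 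Lemma~\ref{lem: weak relaxation} handles each $g_n$; the conclusion then extends by a telescoping argument to arbitrary products of Weyl operators and of elements of $\caB(\scrH_\sys)$, and by norm-closure to all of $\widetilde{\caW}$.

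The second step is to check that $\widetilde{\caW}$ contains $\caK(\scrH_{B(r)})\otimes\lone_{B^c(r)}$. Indeed, Weyl operators $\caW(g)$ with $g\in L^2(B(r))$ (extended by zero to $\bbR^3$) factor as $\caW_{B(r)}(g)\otimes\lone_{B^c(r)}$ under the decomposition $\scrH_\res=\Ga(L^2(B(r)))\otimes\Ga(L^2(B^c(r)))$; since the $C^*$-algebra generated by these Weyl operators acts irreducibly on $\Ga(L^2(B(r)))$, by standard CCR facts it contains every compact operator on that factor. Tensoring with $\caB(\scrH_\sys)=\caK(\scrH_\sys)$ (finite dimensional) gives the claimed inclusion. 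Applying the extended weak relaxation to $K\otimes\lone_{B^c(r)}$ for arbitrary compact $K$ on $\scrH_{B(r)}$ produces the required weak-$*$ convergence, and Gr\"umm's theorem delivers $\norm\rho_t-\rho_{\gs}\norm_1\to 0$.

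The main obstacle is precisely the first extension step. The algebra $\caW_\al$ consists of Weyl operators smeared by test functions with compactly supported Fourier transform, which by Paley--Wiener are entire and therefore cannot be supported in a bounded region $B(r)$; one cannot probe the reduced state on $B(r)$ directly through $\caW_\al$. The escape is the strong uniform photon bound of Lemma~\ref{lem: bound on number operator}, which supplies just enough vector-level tightness to make the Weyl-operator approximation uniform in $t$. Everything downstream is then a routine operator-algebra manipulation together with the classical Gr\"umm theorem.
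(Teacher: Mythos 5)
Your first step (upgrading Lemma~\ref{lem: weak relaxation} from $\caW_\al$ to Weyl operators with arbitrary $L^2$ test functions via $\norm(\caW(g)-\caW(g'))(\lone+N)^{-1/2}\norm\leq C\norm g-g'\norm$ and the uniform photon bound) is sound and is essentially the same device the paper uses inside Lemma~\ref{two-densities}. The argument breaks down afterwards, in two places. First, the assertion that the $C^*$-algebra generated by $\{\caW(g):g\in L^2(B(r))\}$ contains the compact operators on $\Ga(L^2(B(r)))$ is false: the CCR algebra over a nondegenerate symplectic space is simple, so its intersection with the (proper, closed, two-sided) ideal of compacts is zero; irreducibility of the Fock representation by itself gives you nothing here. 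What is true --- and what the paper actually uses --- is that Weyl operators \emph{separate} trace-class operators (injectivity of the noncommutative Fourier transform $\rho\mapsto\Tr\rho\,\caW(g)$).

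Second, and more fundamentally, even with separation in hand, convergence of $\Tr\rho_t\caW(g)$ for every $g$ does \emph{not} imply weak operator convergence of $\rho_t$ when the one-particle space is infinite dimensional, so Gr\"umm's theorem cannot be invoked. Concretely: if $(e_n)$ is an orthonormal sequence in $L^2(B(r))$ and $\rho_n=\str a^*(e_n)\Om\rangle\langle a^*(e_n)\Om\str$, then $\Tr\rho_n\caW(g)\to\langle\Om,\caW(g)\Om\rangle$ for every $g$ and $\Tr\rho_n e^{\ka N}=e^{\ka}$ uniformly, yet $\rho_n$ does not converge to $\str\Om\rangle\langle\Om\str$ in any operator topology. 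Hence the only inputs you use (weak relaxation plus the photon number bound) are insufficient; you need a tightness statement in the one-particle variable. This is exactly what the paper supplies: conservation of $H^2$ together with the relative bounds gives $\sup_t\langle\Psi_t,H_\res^2\Psi_t\rangle<\infty$, hence $\sup_t\Tr[\d\Ga(-\Delta_{B(r)})\rho_{t,r}]<\infty$, which combined with $\sup_t\Tr[N_{B(r)}^2\rho_{t,r}]<\infty$ and the compact Sobolev embedding on the ball makes $\{\rho_{t,r}\}$ precompact in trace norm. Once every sequence has trace-norm convergent subsequences, the separating property of the (approximable-by-$\caW_\al$) Weyl operators identifies each limit point with $\Tr_{B^c(r)}\str\Psi_\gs\rangle\langle\Psi_\gs\str$, and the proposition follows. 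You should add this compactness step; your remaining steps then reduce to the paper's argument.
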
 

\begin{proof}
Let $\Delta_{B(r)}$ be the Laplacian on the ball $B(r)$ with Neumann boundary conditions.  Then, for any $\psi$ in its form domain,
\beq
- \langle \psi, \Delta_{B(r)} \psi \rangle^{}_{L^2(B(r))}= \int_{B(r)}\d x \str\nabla \psi(x)\str^2
\eeq
and this form domain is in fact  the Sobolev space  $W^{2,1}(B(r))$
(completion of $C^{\infty}(B(r))$ w.r.t.\ the norm $\norm \psi \norm+\norm  \nabla \psi \norm$).  Hence, in the sense of quadratic forms on the form domain of $ \Delta_{\bbR^d}$, 
\beq
- \lone_{B(r)}   \Delta_{B(r)}  \lone_{B(r)} \leq  - \Delta_{\bbR^d}.
\eeq
For $\Psi \in \Dom(H_\res)$, we then get 
\beq
\langle \Psi, H^2_\res \Psi \rangle \geq  \langle \Psi, \d\Ga(-\Delta_{\bbR^d}) \Psi \rangle  \geq  \langle \Psi, \d\Ga(-\lone_{B(r)}\Delta_{B(r)}  \lone_{B(r)}) \Psi \rangle
\eeq
Using the relative boundedness of $H_\inter$ w.r.t\ to $H_\res$, the boundedness of $H_\sys$ and the fact that $H^2$ is conserved,  we obtain
\beq
\sup_t\langle \Psi_t, H_\res^2 \Psi_t \rangle \leq    C \norm \Psi_0 \norm^2+ \norm H \Psi_0\norm^2.
\eeq
Let $N_{B(r)}=\d\Ga(1_{B(r)})$, then by the photon number bound we
have 
\beq 
    \sup_t \langle \Psi_t, N^2_{B(r)} \Psi_t \rangle \leq C. 
\eeq  
We can restate these two a priori bounds by saying that the orbit of the reduced density matrix 
\beq
     \rho_{t,r} := \Tr_{B^c(r)} \str \Psi_t \rangle \langle \Psi_t \str
\eeq
remains in 
\beq
    \caK= \caK_{C_N,C_{\Delta}} := \left\{ \rho \in \caB_1(\scrH_{B(r)})\, \str \,  \Tr[  N_B^2 \rho ] < C_{N},   \Tr [ \d\Gamma(-\Delta_{B(r)})  \rho]  < C_{\Delta} \right\}
\eeq
for some fixed $C_N, C_\Delta$ (that depend on the initial condition).   By the compact Sobolev embedding, we easily find that the set $\caK$ is compact in the trace norm topology,  and hence all sequences  $\rho_{t_j,r}$ with $t_j \to \infty, j \to \infty$ have convergent subsequences.   To argue that they necessarily converge to  $\rho_{\infty, r}:= \Tr_{B^c(r)}    \str \Psi_{\gs} \rangle \langle \Psi_{\gs} \str $, we need the Lemma~\ref{two-densities} below.
We now use this Lemma with $\rho$, a limit point of $ \rho_{t,2r}$, and with $\rho'=  \Tr_{B^c(2r)}  \str \Psi_{\gs} \rangle \langle \Psi_{\gs} \str  $. 
The hypothesis $\rho_r \neq \rho'_{r}$ leads to a contradiction because of the weak relaxation (Lemma \ref{lem: weak relaxation}).
\end{proof}

\begin{lemma} \label{two-densities}
Let $\rho,\rho' \in \caB_1(\scrH_{B(2r)})$ be positive operators satisfying $\Tr \rho N^2, \Tr \rho' N^2  <\infty$, and denote by $\rho_r, \rho'_r$ their respective restrictions (partial traces) to $\caB_1(\scrH_{B(r)})$.   If  $\rho_{r} \neq \rho'_r$, then there is a $W \in \caW_\al \cap \caB(\scrH_{B(2r)})$ such that 
\beq
\Tr \rho W   \neq  \Tr \rho' W
\eeq
\end{lemma}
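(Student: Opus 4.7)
My plan combines trace-duality on $\scrH_{B(r)}$ with the density of the coherent-state family built from $\frh_\al$, and then uses the buffer shell $B(2r)\setminus B(r)$ to convert generators of $\caW_\al$ into operators strictly localized in $B(2r)$.

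First, because $\rho,\rho'$ have finite $\langle N^2\rangle$, the reductions $\rho_r,\rho'_r\in\caB_1(\scrH_{B(r)})$ are in fact Hilbert--Schmidt, and the hypothesis $\rho_r\neq\rho'_r$ together with the duality $\caB_1(\scrH_{B(r)})^{*}=\caB(\scrH_{B(r)})$ yields some $V\in\caB(\scrH_{B(r)})$ with $\Tr_{\scrH_{B(r)}}[V(\rho_r-\rho'_r)]\neq 0$. Tensoring with the identity on $\Ga(L^{2}(B(2r)\setminus B(r)))$ gives $V\otimes\lone\in\caB(\scrH_{B(2r)})$ satisfying
$$
\Tr_{\scrH_{B(2r)}}[(V\otimes\lone)(\rho-\rho')]\neq 0 .
$$
The remaining task is to realize (or approximate closely enough) $V\otimes\lone$ as an element of $\caW_\al\cap\caB(\scrH_{B(2r)})$ without destroying this non-vanishing trace.

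For the realization I would use that the coherent vectors $\caW(\psi)\Om$, $\psi\in\frh_\al$, are total in $\scrH_\res$: $\frh_\al$ is $L^{2}$-dense in $\frh$, and $\psi\mapsto\caW(\psi)\Om$ is norm-continuous. Hence the linear span of rank-one operators $|S_{1}\otimes\caW(\psi_{1})\Om\rangle\langle S_{2}\otimes\caW(\psi_{2})\Om|$, with $S_i\in\caB(\scrH_\sys)$ and $\psi_i\in\frh_\al$, is weakly dense in $\caB(\scrH)$. Using the identity
$$
|\caW(\psi_{1})\Om\rangle\langle\caW(\psi_{2})\Om|=\caW(\psi_{1})\,|\Om\rangle\langle\Om|\,\caW(\psi_{2})^{*}
$$
together with a representation of the vacuum projection $|\Om\rangle\langle\Om|$ as a Bochner integral of Weyl operators with form factors in $\frh_\al$ (a Gaussian weighting suffices), each such rank-one operator becomes a norm-convergent integral of elements of $\caW_\al$, hence an element of $\caW_\al$ itself. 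Because $\rho,\rho'$ have finite $\langle N\rangle$, the weak-operator topology is strong enough for the approximation to preserve the distinguishing trace in the limit.

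The main obstacle, and the hard part of the proof, is to land in $\caB(\scrH_{B(2r)})$ rather than merely in $\caW_\al$. A single $\caW(\psi)$ with $\psi\in\frh_\al$ cannot act as identity on $\Ga(L^{2}(B^{c}(2r)))$, because the uncertainty principle forbids a nonzero $\psi$ from being compactly supported in position space while having compactly supported Fourier transform. The remedy is to exploit the pointwise decay $|\psi(x)|\leq C(1+|x|)^{-3}$ available for $\psi\in\frh_\al$ (the estimate displayed right after \eqref{betadecay}, cf.\ Lemma A.1 of \cite{deroeckkupiainenpropagation}): by dilation/truncation one can pick each $\psi_{i}$ so that $\|\psi_i|_{B^{c}(2r)}\|_{L^{2}}$ is arbitrarily small. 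Factoring $\caW(\psi_i)=\caW(\psi_i|_{B(2r)})\otimes\caW(\psi_i|_{B^{c}(2r)})$, one then averages out the outer Fock factor by a conditional expectation (integrating coherent weights over $\frh_\al$-vectors supported in $B^{c}(2r)$). This produces $W\in\caW_\al$ of the form $W_{0}\otimes\lone$ with $W_{0}\in\caB(\scrH_{B(2r)})$; the error from this averaging is controlled jointly by the small outer $L^{2}$-mass of the $\psi_{i}$ and by the $\langle N^{2}\rangle$-bound on $\rho,\rho'$, and a diagonal argument makes it smaller than the distinguishing value $|\Tr[(V\otimes\lone)(\rho-\rho')]|$, completing the construction.
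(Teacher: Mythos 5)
Your construction never actually produces an element of $\caW_\al$, and the step that is supposed to do so is false. The vacuum projection $|\Om\rangle\langle\Om|$ does \emph{not} belong to the $C^*$-algebra generated by Weyl operators, so the rank-one operators $|S_1\otimes\caW(\psi_1)\Om\rangle\langle S_2\otimes\caW(\psi_2)\Om|$ are not in $\caW_\al$. The Gaussian-weighted integral of Weyl operators that you invoke converges only in the weak (or strong) operator topology, never in norm: for $h\neq h'$ one has $\|\caW(h)-\caW(h')\|=2$, since the spectrum of $\caW(h-h')$ is the full unit circle; hence $\psi\mapsto\caW(\psi)$ is nowhere norm-continuous, its range is norm-discrete (so not essentially separably valued), and a norm-convergent Bochner integral is impossible. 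More structurally, the Weyl algebra over a dense subspace of $\frh$ is simple, unital and acts irreducibly on Fock space, so it contains no nonzero compact operator at all. The same objection defeats the ``conditional expectation'' in your last paragraph: averaging Weyl operators over outer form factors again leaves the $C^*$-algebra. Weak-operator approximability by elements of $\caW_\al$ is not membership in $\caW_\al$, so as written the lemma is not proved.

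The strategy is repairable, and the repair is essentially the paper's proof. One should not try to realize a prescribed $V\otimes\lone$ inside $\caW_\al$; it suffices to show that some \emph{single} operator $S\otimes\caW(\psi)$ with $\psi\in\frh_\al$ already distinguishes $\rho$ from $\rho'$. For this one uses that the characteristic function $\varphi\mapsto\Tr[\sigma(S\otimes\caW(\varphi))]$, with $\varphi$ ranging over any dense subset of $L^2(B(r))$, separates points of $\caB_1(\scrH_{B(r)})$; the only genuine difficulty is that the dense set $1_{B(r)}\frh_\al$ consists of truncations which are not themselves in $\frh_\al$. This is overcome exactly by the estimate you state in your third paragraph: choose $\psi\in\frh_\al$ with $\psi_{B(r)}=\varphi$ and $\|\psi_{B^c(r)}\|\leq\ep$ (possible by the pointwise decay of $\frh_\al$-functions), factor $\caW(\psi)=\caW(\psi_{B(r)})\caW(\psi_{B^c(r)})$, and control $|\Tr\rho\,\caW(\psi)-\Tr\rho\,\caW(\psi_{B(r)})|$ through $\|(\caW(\chi)-\lone)(1+N)^{-1}\|\leq C\|\chi\|$ and the assumed $N$-moment bounds on $\rho,\rho'$. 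So you have the right truncation ingredient, but it must be used to show that a genuine Weyl operator separates the two states, not to manufacture a localized operator $W_0\otimes\lone$ inside $\caW_\al$ by weak limits.
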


\begin{proof}
Given a dense set $\frg \subset L^2(B(r))$, the $C^\star$ algebra generated by the elements $S \in \caB(\scrH_\sys)$ and $\caW(\psi), \psi \in \frg$ separates points in $\caB_1(\scrH_{B(r)})$.  This does not immediately prove the lemma because $ \frh_\al \cap L^2(B(r)) $ need not be dense in $L^2(B(r))$. 
However, let us  consider the dense set  $ 1_{B(r)} \frh_\al $ and write $\psi=\psi_{B(r)} +\psi_{B^c(r)}$ with $\psi_{B(r)}= 1_{B(r)} \psi_{B(r)}$. 
Since  (see, for example, \cite{derezinskigerardmassive}) 
\beq 
\norm (\caW(\psi)-\lone)(1+N)^{-1} \norm \leq  C  \norm \psi \norm_{\frh} 
\eeq
and $\caW(\psi)= \caW(\psi_{B(r)}) \caW(\psi_{B^c(r)}) $, we get 
\beq
\str \Tr \rho \caW(\psi) -   ´\Tr \rho \caW(\psi_{B(r)})  \str  \leq   C  \norm \psi_{B^c(r)} \norm^2   \norm \caW(\psi_{B(r)})  \norm \norm \rho N \norm_1
\eeq
and analogously for $\rho'$. This proves the claim because, given an element  $ \varphi \in  1_{B(r)} \frh_\al $ and $\ep >0$,  we can find $\psi  \in \frh_\al$ such that that $\varphi= \psi_{B(r)} $ and   $\norm \psi_{B^c(r)} \norm \leq \ep $. 
\end{proof}

By combining the strong relaxation with the photon bound, it is clear that we also get relaxation of some unbounded observables on $\scrH_{B(r)}, r <\infty$. Indeed, let $O$ be an operator on $\scrH_{B(r)}$ such that, say  $\norm O \lone_{N=n}\norm$ grows not faster than polynomially in $n$, then the second term in 
\beq
\langle \Psi_t,  O \Psi_t \rangle =   \langle \Psi_t,  O \lone_{N \leq n} \Psi_t \rangle +   \langle \Psi_t,  O \lone_{N > n} \Psi_t \rangle \\
\eeq
can be bounded uniformly in time $t$ by a number that vanishes as $n \to \infty$, and the first term converges as $t \to \infty$.  It follows that
\beq \label{eq: relaxation unbounded}
\lim_{t\to \infty} \langle \Psi_t,  O \Psi_t \rangle =   \langle \Psi_\gs,  O \Psi_\gs \rangle
\eeq
In particular, this allows us to make the identification  \eqref{eq: identification a} thus supplementing Lemma \ref{lem: propagation bound}.
To end this section, we  give the 

\subsubsection*{Proof of Lemma \ref{lem: exp phonon loc}} \label{sec: proof of loc lemma}
 If Lemma \ref{lem: exp phonon loc} were false, then for any $\ka>0$,  $ \limsup_{n \to \infty} a(n)\e^{\ka n}= \infty$, with $a(n):= \norm \lone_{N=n}\Psi_{\gs} \norm$. 
For any number $m$, we could then find  $n \in \bbN$ and $r <\infty$ such that 
\beq  \label{eq: contra one}
 \norm \lone_{N=n} \Ga(1_{B(r)}(x)) \Psi_{\gs} \norm \geq m \e^{-\ka n} 
\eeq
Let $\ka >0$ be as in Lemma \ref{lem: bound on number operator}  and choose $m$ such that 
\beq  \label{eq: contra two}
\sup_t \norm \lone_{N=n} \Psi_{t} \norm \leq (m/2) \e^{-\ka n} 
\eeq
for some arbitrarily chosen $\Psi_0$.    Since $O= \lone_{N=n} \Ga(1_{B(r)}(x)) $ is a bounded operator on $\scrH_{B(r)}$, Proposition \ref{prop: fate of atom} implies that  $\langle \Psi_t, O \Psi_t \rangle \to  \langle \Psi_{\gs}, O \Psi_{\gs} \rangle $, which is clearly incompatible with (\ref{eq: contra one}, \ref{eq: contra two}).


\section{Proof of the main result} \label{sec: scattering setup}

\subsection{Scattering identification and inverse}

We introduce some standard tools used in scattering theory. In their present form, these tools were introduced in  \cite{derezinskigerardmassive} and we follow very closely, if not literally, this source.  First, we recall the natural isomorphism of Hilbert spaces
\beq
\Ga(\frh\oplus \frh) \sim   \Ga( \frh)  \otimes \Ga(\frh) 
\eeq
Consider $j_0, j_\infty$ bounded operators on $\frh$ satisfying 
\beq \label{eq: conditions on j}
j_0+j_\infty=1 , \qquad   j_0^* j_0+ j_{\infty}^*j_\infty \leq \lone
\eeq
and define $j : \frh \to \frh \oplus \frh:  \phi \mapsto   (j_0 \phi, j_\infty \phi)$, satisfying $\norm j \norm \leq 1$ because of \eqref{eq: conditions on j}.
We construct the operators $\breve\Ga(j) : \Ga(\frh) \to \Ga(\frh) \otimes   \Ga(\frh)$  by 
\begin{align}
 & \breve\Ga(j)  a^{\ast}(f_1) a^{\ast}(f_2)  \ldots  a^{\ast}(f_m)  \Om    \\[2mm]
 & :=        \left(  a^{\ast}( j_0 f_1) \otimes \lone + \lone 
 \otimes a^{\ast}( j_\infty f_1)    \right)        \ldots \left(  a^{\ast}( j_0 f_m) \otimes \lone + \lone 
 \otimes a^{\ast}( j_\infty f_m)    \right)(\Om\otimes \Om)  
\end{align}
where $\Om$ are Fock vacua.   Note that $\norm\breve\Ga(j) \norm \leq 1 $.  
Hence, the picture that goes with the above definition of
$\breve\Ga(j) $ is that each photon is split into a part close to the
origin and one part far from the origin. The former is put in the first copy of Fock space, the latter into the second.

Let us now pick a particular
 $j=j(t): \frh \to \frh\oplus \frh$ defined by $j_0(t) = \theta(\cdot/t) $ and $j_\infty(t)=1-j_0(t)$ with $\theta$ a smooth indicator (see Section \ref{sec: technical tools}) such that $\supp \theta \subset \{\str x \str \leq \largev\}$ and $\theta=1$ on $ \{\str x \str \leq \smallv\}$, with $\smallv < \largev <1$.  It is clear that  $j(t)$ satisfies the conditions \eqref{eq: conditions on j}. 
With this choice of $j(t)$, we construct an 'extended inverse scattering identification':
\beq
J_{\extension, t}  :=  \lone \otimes \breve\Ga (j(t)) :  \scrH \to  \scrH_{as}
\eeq
where $\scrH_{\mathrm{as}}$ is the asymptotic space defined by
\beq
\scrH_{\mathrm{as}} = \scrH \otimes \scrH_\res \sim \scrH_\sys \otimes \scrH_\res \otimes \scrH_\res 
\eeq
 
The operator $J_{\extension, t}$  is a one-sided inverse to the
'extended scattering identification' $I_\extension: \scrH_{\mathrm{as}} \to \scrH$, see \eqref{eq: inverse indeed}. To construct $I_\extension$, define it first as a closable operator $ \Ga(\frh \oplus \frh) \sim \scrH_\res \otimes \scrH_\res \to \scrH_\res$ by  $ \Ga (\iota)$ with $\iota: \frh\oplus \frh \to \frh: (\phi, \phi')\mapsto \phi+\phi'$, or, in a more intuitive representation, by 
\beq 
I_{\extension}(a^*({f_1}) \ldots a^*({f_n})  \Omega\otimes  a^*({f'_1}) \ldots  a^*({f'_{n'} })  \Omega)  =  a^*({f_1}) \ldots  a^*({f_n}) a^*({f'_1}) \ldots  a^*({f'_{n'}})  \Omega
\eeq
and then extend by tensoring with $\lone$ on $\scrH_\sys$. 
 Note that $\iota j(t)=\lone$ and hence indeed
\beq  \label{eq: inverse indeed}
 I_{\extension} J_{\extension, t} =\lone.
\eeq
The operator $I_{\extension}$ is unbounded, but by using $\norm a^{}(f) \lone_{N =n}\norm = \sqrt{n}\norm f \norm_\frh $, one can compute
\beq \label{eq: crude bound on extension i}
\norm I_{\extension} (\lone_{N < n'} \otimes \lone_{N<n}) \norm^2 = \frac{(n+n')!}{n! n'!}
\eeq
$I_{\extension} $ is related to $I$ by 
\beq 
 I_{\extension} \Psi_{\gs}\otimes \Psi_\res =  I  \Psi_\res  
\eeq
For our analysis, it is important that $I \lone_{N \leq n}$ is bounded for any $n$, as follows by combining
 \eqref{eq: crude bound on extension i} and Lemma \ref{lem: exp phonon
   loc}.  
In particular, this implies that the limit $W_t \Psi$ in \eqref{eq: limit of w operator} exists for \emph{any} $\Psi \in \caD_{fin}$.
Finally, we also define
\beq 
J_t :=    \langle \Psi_{\gs} \str \otimes \lone  \, J_{\extension, t} :   \scrH \to \scrH_\res
\eeq
and we will argue that $J_t$ is an approximative inverse to $I$.

\subsection{Proof of main result} \label{sec: proof of main result}

We  define the asymptotic Hamiltonian $H_{\mathrm{as}}$  on $\scrH_{\mathrm{as}}$ as
\beq \label{eq: as ham}
H_{\mathrm{as}} =  H \otimes \lone + \lone \otimes H_\res
\eeq
The following two propositions are our  main technical results.
To state these results, and also in the remainder of the paper, we assume that Assumptions \ref{ass: infrared behaviour} and \ref{ass: fermi golden rule} hold and that  $\str\la\str$ is sufficiently small.  This will not be repeated any more.

\begin{proposition}[Range of inverse wave operator] \label{prop: range wave}
For $\Psi \in \caD_\al$,
\beq
 \lim_{n \to \infty} \limsup_{t \to + \infty}  \big\norm \big(I \lone_{[N \leq n]} J_t -\lone \big) \e^{-\i t H}\Psi \big\norm =  0.
\eeq

\end{proposition}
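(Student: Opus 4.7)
My plan is to rewrite the target difference using the identity $I_{\mathrm{ex}}J_{\mathrm{ex},t} = \lone$ together with $I\psi = I_{\mathrm{ex}}(\Psi_{\gs}\otimes \psi)$. Setting $P_{\gs}=|\Psi_{\gs}\rangle\langle\Psi_{\gs}|$ and $Q_{\gs}=\lone-P_{\gs}$, the identity $I\lone_{[N\leq n]}J_{t}\Psi_t = I_{\mathrm{ex}}(P_{\gs}\otimes \lone_{[N\leq n]})J_{\mathrm{ex},t}\Psi_t$ gives the decomposition
\[
\big(I\lone_{[N\leq n]}J_{t} - \lone\big)\Psi_t = -I_{\mathrm{ex}}(Q_{\gs}\otimes \lone)J_{\mathrm{ex},t}\Psi_t - I_{\mathrm{ex}}(P_{\gs}\otimes \lone_{[N>n]})J_{\mathrm{ex},t}\Psi_t,
\]
which I would estimate as two separate summands. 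The photon-number weighted estimates I would use throughout are: (i) $\norm I_{\mathrm{ex}}\, e^{-\kappa(N_1+N_2)/2}\norm \leq C$ for some $\kappa>0$, from \eqref{eq: crude bound on extension i} plus Lemma~\ref{lem: exp phonon loc} by Cauchy--Schwarz on sectors of fixed $N_1+N_2$; and (ii) the uniform-in-$t$ bound $\norm e^{\kappa'(N_1+N_2)/2} J_{\mathrm{ex},t}\Psi_t\norm\leq C'$ for small $\kappa'$, based on the operator identity $\breve\Ga(j)^{*} e^{\kappa' (N_1+N_2)}\breve\Ga(j) = \Ga(e^{\kappa'}(j_0^{*}j_0+j_\infty^{*}j_\infty))\leq e^{\kappa' N}$ combined with Lemma~\ref{lem: bound on number operator}.

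With these, the \emph{ultraviolet} summand $I_{\mathrm{ex}}(P_{\gs}\otimes \lone_{[N>n]})J_{\mathrm{ex},t}\Psi_t = I\lone_{[N>n]} J_t\Psi_t$ is $\leq C e^{-c n}$ uniformly in $t$, so it vanishes in the $n\to\infty$ limit. The \emph{relaxation} summand $A_t := I_{\mathrm{ex}}(Q_{\gs}\otimes\lone) J_{\mathrm{ex},t}\Psi_t$ is the main object. Combining the same weighted estimates with the bound $\norm e^{\kappa N_1/2}Q_{\gs}e^{-\kappa N_1/2}\norm < \infty$ (another consequence of Lemma~\ref{lem: exp phonon loc}), the claim $\norm A_t\norm\to 0$ reduces to
\[
\lim_{t\to\infty}\norm(Q_{\gs}\otimes \lone_{[N\leq m]}) J_{\mathrm{ex},t}\Psi_t\norm = 0 \qquad\text{for each fixed } m.
\]

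For this core claim I would freeze the splitting at a large but fixed scale $t_c$ and use
\[
\norm(Q_{\gs}\otimes\lone_{[N\leq m]})J_{\mathrm{ex},t}\Psi_t\norm \leq \norm(Q_{\gs}\otimes\lone_{[N\leq m]}) J_{\mathrm{ex},t_c}\Psi_t\norm + \norm (J_{\mathrm{ex},t}-J_{\mathrm{ex},t_c})\Psi_t\norm.
\]
The replacement error satisfies $\norm (J_{\mathrm{ex},t}-J_{\mathrm{ex},t_c})\Psi_t\norm^2 \leq 2\langle\Psi_t,\d\Ga(1-\chi_{t,t_c})\Psi_t\rangle$ with $\chi_{t,t_c}=\theta_t\theta_{t_c} + (1-\theta_t)(1-\theta_{t_c})$, where $1-\chi_{t,t_c}$ is supported in the annulus $\{v_1 t_c\leq |x|\leq v_2 t\}$. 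Bounding this annular photon content by $\langle\Psi_t,\d\Ga(\theta^{(v_2)}_t)\Psi_t\rangle - \langle\Psi_t,\d\Ga(\theta^{(v_1)}_{t_c})\Psi_t\rangle$ and applying Lemma~\ref{lem: propagation bound} together with \eqref{eq: identification a}, both terms converge to $\langle\Psi_{\gs},N\Psi_{\gs}\rangle$ in the iterated limit $t\to\infty$ followed by $t_c\to\infty$, so the difference vanishes. For the first term, $J_{\mathrm{ex},t_c}$ is fixed in $t$, and Proposition~\ref{prop: fate of atom} --- extended to polynomially-unbounded observables via \eqref{eq: relaxation unbounded} --- yields
\[
\lim_{t\to\infty}\langle\Psi_t, J_{\mathrm{ex},t_c}^{*}(Q_{\gs}\otimes \lone_{[N\leq m]})J_{\mathrm{ex},t_c}\Psi_t\rangle = \langle\Psi_{\gs}, J_{\mathrm{ex},t_c}^{*}(Q_{\gs}\otimes \lone_{[N\leq m]})J_{\mathrm{ex},t_c}\Psi_{\gs}\rangle.
\]
By Lemma~\ref{lem: localization ground state} the ground-state photons are concentrated near the origin, so $J_{\mathrm{ex},t_c}\Psi_{\gs}\to \Psi_{\gs}\otimes \Om$ as $t_c\to\infty$, and the right-hand side vanishes in that limit because $Q_{\gs}\Psi_{\gs}=0$.

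The hardest step will be the uniform control of the annular photon content required for the replacement error to vanish --- in particular, simultaneously invoking Lemma~\ref{lem: propagation bound} at the outer scale $t$ (using its uniformity on $t_c\in[\str\la\str^{-2},t]$) and at the inner scale $t_c$, then sending $t\to\infty$ followed by $t_c\to\infty$. A subsidiary technical point is justifying the application of the extended Proposition~\ref{prop: fate of atom} to $J_{\mathrm{ex},t_c}^{*}(Q_{\gs}\otimes \lone_{[N\leq m]})J_{\mathrm{ex},t_c}$, which is not literally supported in a fixed ball but whose effective localization at the fixed scale $t_c$, combined with the $\lone_{[N\leq m]}$ truncation on the second factor, brings it within reach of \eqref{eq: relaxation unbounded}.
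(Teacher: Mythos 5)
Your overall architecture is genuinely different from the paper's: you decompose $\big(I \lone_{[N\le n]}J_t-\lone\big)\Psi_t$ algebraically via $I_{\extension}J_{\extension,t}=\lone$ and $\lone = P_{\gs}+Q_{\gs}$, and you correctly identify the two physical inputs that must carry the proof (emptiness of the annulus between the inner scale and $v_2 t$, obtained from Lemma~\ref{lem: propagation bound} used at two scales, and strong local relaxation). The paper instead never forms the individual pieces $I_{\extension}(Q_\gs\otimes\lone)J_{\extension,t}\Psi_t$ and $I\lone_{[N>n]}J_t\Psi_t$: it proves an approximate tensor factorization $\Psi_t\approx\Psi_{\gs,B(r)}\otimes\Om_{M_t}\otimes\Psi_{B^c(tv_2),t}$ (Lemmas~\ref{lem: close to one dim} and \ref{lem: qit}) and then computes the action of $J_t$ and of $I\lone_{[N\le n]}$ on the factored vector directly, always with the hard cutoff $\lone_{[N\le n]}$ in place and with $t\to\infty$ taken before $n\to\infty$. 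This difference is not cosmetic; it is where your argument develops two genuine gaps.

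First, the weighted estimates on $I_{\extension}$ do not close. By \eqref{eq: crude bound on extension i}, $\norm I_{\extension}(\lone_{N_1=n_1}\otimes\lone_{N_2=n_2})\norm^2=\binom{n_1+n_2}{n_1}$, which on the diagonal $n_1\approx n_2$ grows like $2^{n_1+n_2}$ (up to polynomial factors); hence $\norm I_{\extension}\,\e^{-\ka(N_1+N_2)/2}\norm<\infty$ forces $\ka\ge\ln 2$. But the compensating bounds — Lemma~\ref{lem: bound on number operator} for $\e^{\ka'(N_1+N_2)/2}J_{\extension,t}\Psi_t$ and Lemma~\ref{lem: exp phonon loc} for $\e^{\ka N_1/2}$ against $P_\gs$ or $Q_\gs$ — hold only for \emph{some} $\ka'>0$, possibly much smaller than $\ln 2$. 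Unless $(\e^{\ka}-1)(\e^{\ka'}-1)>1$ for the available rates, the sector sums defining your ``ultraviolet'' term $I\lone_{[N>n]}J_t\Psi_t$ and your reduction of $A_t$ to the truncated core claim simply diverge. The paper avoids this by exploiting that $\binom{n+n'}{n}$ is only \emph{polynomial} in $n'$ at fixed $n$, which is why $I\lone_{[N\le n]}$ is bounded for each $n$ while no uniform exponential weight exists.

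Second, and more seriously, the final relaxation step is not justified and the general principle it invokes is false. The observable $J_{\extension,t_c}^{*}(Q_{\gs}\otimes\lone_{[N\le m]})J_{\extension,t_c}$ is not localized in any fixed ball: the projection $\lone_{[N\le m]}$ on the second factor constrains the number of photons in the exterior region $B^c(v_1 t_c)$, where $j_\infty(t_c)=1$. That quantity does \emph{not} relax to its ground-state value — the escaping radiation stays in $B^c(v_1 t_c)$ for all later times — so neither Proposition~\ref{prop: fate of atom} nor \eqref{eq: relaxation unbounded} applies (try your argument with $Q_\gs$ replaced by $\lone$: the asserted limit is then plainly wrong). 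The desired conclusion $\lim_t\norm(Q_\gs\otimes\lone_{[N\le m]})J_{\extension,t}\Psi_t\norm=0$ is true, but only because $J_{\extension,t}\Psi_t$ approximately factorizes as $\Psi_\gs\otimes(\text{outgoing})$ so that $Q_\gs$ annihilates the first factor — and proving that factorization is precisely the content of the paper's \eqref{eq: 1} and \eqref{eq: conclusion}. In other words, the step you defer as a ``subsidiary technical point'' is the actual theorem; your two-scale use of Lemma~\ref{lem: propagation bound} to empty the annulus is sound and parallels Lemma~\ref{lem: close to one dim}, but it must be converted into a statement about the \emph{vector} $\Psi_t$ (as in Lemma~\ref{lem: qit}) rather than fed into an operator-relaxation statement that is unavailable for non-local observables.
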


\begin{proposition}[Existence of inverse wave operator] \label{prop: existence of inverted}
The limit
\beq
   Z_{\extension}\Psi := \lim_{t \to + \infty}   \e^{\i t H_{\mathrm{as}}} J_{\extension, t}  \e^{-\i t H}\Psi
\eeq
exists for $\Psi \in \caD_\al$. Note that $Z_\extension$ extends to a contraction $\scrH\to \scrH_\mathrm{as}$ because $J_\extension$ is a contraction.
\end{proposition}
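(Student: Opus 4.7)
The plan is to apply Cook's method. With $\Psi \in \caD_\al$ and $\Psi_t = \e^{-\i t H}\Psi$, set $F(t) := \e^{\i t H_\mathrm{as}} J_{\extension,t} \Psi_t$; the aim is to show $F'(t)$ has integrable norm, which then makes $F(t)$ Cauchy as $t \to \infty$. Differentiating,
\[
F'(t) = \e^{\i t H_\mathrm{as}}\Big[\dot J_{\extension,t} + \i\big(H_\mathrm{as} J_{\extension,t} - J_{\extension,t} H\big)\Big]\Psi_t,
\]
so it suffices to control the bracketed operator applied to $\Psi_t$ in norm by an integrable function of $t$. Because $J_{\extension,t} = \lone_\sys \otimes \breve\Ga(j(t))$ commutes with $H_\sys \otimes \lone \otimes \lone$, the $H_\sys$-contributions cancel. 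What remains splits into a \emph{geometric} piece arising from the intertwining identity
\[
\i\big(\d\Ga(\om \oplus \om)\breve\Ga(j(t)) - \breve\Ga(j(t)) \d\Ga(\om)\big) + \dot{\breve\Ga}(j(t)) = \d\breve\Ga\big(j(t);\, b(t), -b(t)\big),
\]
with $b(t) := \i[\om, j_0(t)] + \dot j_0(t)$ (using $j_\infty(t) = 1 - j_0(t)$), and an \emph{interaction} piece $\i\la D \otimes \big[(\Phi(\phi)\otimes\lone)\breve\Ga(j(t)) - \breve\Ga(j(t))\Phi(\phi)\big]$. A direct commutation shows the interaction piece can be written as a sum of creation/annihilation operators involving $j_\infty(t)\phi$ distributed over the two Fock factors.

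The interaction piece is the easy one. From $\phi \in \frh_\al$ and the position-space decay following \eqref{betadecay}, $\|j_\infty(t)\phi\|_\frh \leq C t^{-1-\al/2}$; combining this with the standard bound $\|\Phi(f)\Xi\| \leq C\|f\|_\frh \|(N+1)^{1/2}\Xi\|$ and the photon bound Lemma~\ref{lem: bound on number operator} yields an $\caO(t^{-1-\al/2})$ norm contribution, which is integrable. The geometric piece is the main obstacle. Both $\dot j_0(t)$ and the principal symbol of $\i[\om, j_0(t)]$ are supported in the transition shell $\{\smallv < |x|/t < \largev\}$ and pointwise of size $1/t$ there. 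A quadratic-form bound of the type $\|\d\breve\Ga(j(t); b(t), -b(t))\Xi\|^2 \lesssim \langle\Xi,\, \d\Ga(b(t)^* b(t))\Xi\rangle$ reduces the task to estimating $\langle\Psi_t,\, \d\Ga(b(t)^* b(t))\Psi_t\rangle$; since $b(t)^* b(t) \lesssim t^{-2}\chi_t$ with $\chi_t$ a smooth indicator of the shell, the propagation bound \eqref{eq: field in transition region} yields $\langle\Psi_t, \d\Ga(\chi_t)\Psi_t\rangle = \caO(t^{-\al})$ and hence an $\caO(t^{-1-\al/2})$ contribution, again integrable. The delicate point is controlling the pseudodifferential remainder in $\i[\om, j_0(t)]$ arising from the singularity of $\om(k) = |k|$ at $k=0$; this should be handled by a low-momentum cutoff combined with the soft-photon bound (Lemma~\ref{lem: control on small momenta}), which controls the small-$|k|$ contribution uniformly in $t$.

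With both pieces integrable, $F(t)$ is Cauchy, so $Z_\extension \Psi := \lim_t F(t)$ exists on $\caD_\al$. The extension to a contraction $\scrH \to \scrH_\mathrm{as}$ follows from the uniform bound $\|J_{\extension,t}\| \leq 1$ (cf.\ \eqref{eq: conditions on j}) and density of $\caD_\al$ in $\scrH$.
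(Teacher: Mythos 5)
Your proposal follows essentially the same route as the paper: Cook's method with the same splitting into an interaction term (controlled by the decay of $j_\infty(t)\phi$ together with the photon bound) and a geometric term reduced to $\langle\Psi_t,\d\Ga(w_0^*w_0)\Psi_t\rangle$, whose shell-supported principal part is handled by the propagation bound and whose low-momentum commutator remainder is handled by a cutoff plus the soft-photon bound. The only details left implicit --- the precise decomposition of $\i[\om,j_0(t)]$ (the paper's Appendix C) and the extra $\|N^{1/2}\Psi_t\|$ factor in the Dereziński--Gérard estimate, harmless by the photon bound --- are exactly what the paper supplies, so nothing is missing in substance.
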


The remaining sections of the paper are devoted to the proof of Propositions \ref{prop: range wave} and \ref{prop: existence of inverted}. We now show how they lead to  our main result. 

\begin{proof}[Proof of Theorem \ref{thm: ac}]
Define the contraction $Z: \scrH\to \scrH_\res$ by 
\beq  \label{eq: lim of z}
Z\Psi := \langle \Psi_\gs \str \otimes \lone \,  (Z_{\extension} \Psi) = \lim_{t \to +\infty}   \e^{\i t (E_\gs+H_\res)}J_t \e^{-\i t H}\Psi.
\eeq
Using this and the fact that $\lim_{t \to \infty} W_t \Psi=W_+\Psi$ holds for any $\Psi \in \caD_{fin}$ (Theorem \ref{thm: w op} combined with the remark following \eqref{eq: crude bound on extension i}), we get 
\baq
   \lim_{t \to +\infty}  \e^{\i t H} I  \e^{-\i t (E_\gs+H_\res)} \lone_{[N \leq n]} \e^{\i t (E_\gs+H_\res)} J_t  \e^{-\i t H}\Psi =   W_+  \lone_{[N \leq n]} Z\Psi
\eaq
Since $W_+$ is an isometry, we can take the limit $n \to \infty$ on the RHS and obtain $W_+ Z\Psi$. 
By Proposition \ref{prop: range wave}, the $n \to \infty$ limit of the LHS equals $\Psi$.  
Since $ \caD_\al$ is dense,  we have proven that $W_+ Z$ extends to the identity on $\scrH$, hence in particular $\Ran W_+=\scrH$. 
\end{proof}

\section{Range of the wave operator}
In this section, we prove Proposition \ref{prop: range wave}. 
Fix a smooth indicator $\theta^{(1)}$ such as described in Section \ref{sec: technical tools}. Let  $0<v_1<,v_2<1$ be such that
 $\theta^{(1)}(x)=1$ for $\str x \str \leq \smallv$ and $0$ for $\str x \str \geq \largev$.  The indicator $\theta^{(1)}$ is used to  construct the identification operator $J_t$ (see Section \ref{sec: scattering setup}). Moreover, we need another 
 smooth indicator $\theta^{(2)}$ such that $ \theta^{(2)}(x) =1$ for $\str x \str \leq  v_2 $ but still $\supp \theta^{(2)} \subset B(v_3)$, for some $v_3 <1$. 
The relevant distances are summarized in Figure \ref{fig: scalesonaxis}.
\begin{figure}[h!] 
\vspace{0.5cm}
\begin{center}
\def\svgwidth{\columnwidth}
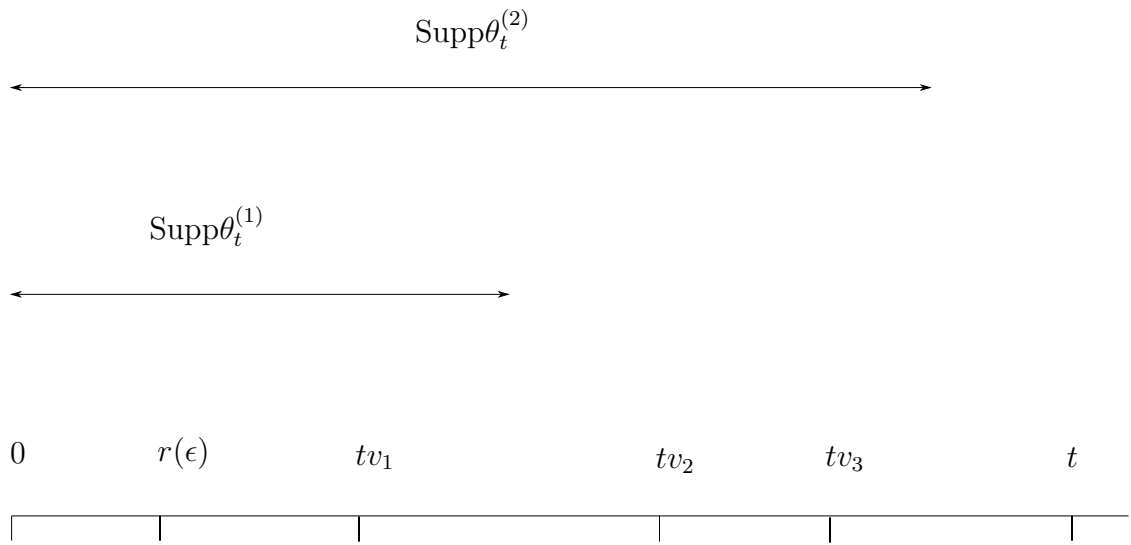
\caption{ \footnotesize{Important distances from the origin to the position of the outgoing waves, at $r \sim t$.}   \label{fig: scalesonaxis} }
\end{center}
\end{figure}
Note that for  a measurable $K \subset \bbR^d \setminus \{0\}$,  the space $\scrH_{K}$ is Fock space in a natural way, namely
\beq
\scrH_{K} = \Ga( L^2(K))
\eeq
and we denote the corresponding vacuum by $\Om_{K}$.   We also write $N_K= \d \Ga(1_K)$ for the photon number in the region $K$. 

We define 
\beq
\Psi_{\gs,B(r)} :=  \langle \Om_{B^c(r)} \str \Psi_{\gs} \in \scrH_{B(r)}.
\eeq
Then  $\Psi_{\gs,B(r)} \otimes \Om_{B^c(r)}$ is an approximation to the ground state, more precisely
\beq \label{eq: convergence of gs}
  \norm  \Psi_{\gs,B(r)} \otimes \Om_{B^c(r)} -   \Psi_{\gs} \norm  \to 0, \qquad  r \to \infty
\eeq


\subsubsection*{Quantitities depending on $\ep>0$} 
We fix  a small $\ep>0$ and we now choose  $r(\ep), t(\ep)$ as follows:
We first choose $r(\ep)$ such that for $r >r(\ep)$,
\beq \label{ec: first cond on eps}  
\big|\norm \langle \Psi_{\gs,B(r)} \str \Psi_\gs \norm^2-1 \big| \leq \ep, \qquad \langle \Psi_\gs, N_{B^c(r)} \Psi_\gs \rangle \leq \ep,
\eeq
then we choose $t(\ep)$ such that for $t>t(\ep)$ and $r=r(\ep)$:
\begin{enumerate}
\item  
\beq
\big| \norm \langle \Psi_{\gs,B(r)} \str \Psi_t \norm^2-1 \big| \leq \ep, \qquad  \left\str \langle \Psi_\gs, N_{B(r)} \Psi_\gs  \rangle-  \langle \Psi_t, N_{B(r)} \Psi_t  \rangle \right\str \leq \ep.
\eeq
This is possible by  Proposition \ref{prop: fate of atom} and the remarks following it. 
\item 
The error term $\caO(t^{-\al})$ in Lemma \ref{lem: propagation bound}
  with $\theta=\theta^{(1)}$ and $\theta=\theta^{(2)}$ is not larger than $\ep$.
\end{enumerate}
 
For $r=r(\ep)$ and $t\geq t(\ep)$, as introduced above, we define the time-dependent 
annular region
\beq
M_t= B(t v_2) \backslash B(r).
\eeq
Our proof of  Proposition \ref{prop: range wave} exploits that there are no photons in $M$ (up to  $\caO(\ep)$-terms).  
Indeed we have:
\begin{lemma} \label{lem: close to one dim}
For $r=r(\ep)$ and $t\geq t(\ep)$:
\beq
\norm \langle \Psi_{\gs,B(r)}\otimes \Om_{M_t}\str \Psi_t  \norm = 1+\caO(\ep)
\eeq
\end{lemma}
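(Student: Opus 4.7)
The plan is to prove the equivalent squared statement $\| \langle \Psi_{\gs,B(r)} \otimes \Om_{M_t} \str \Psi_t \|^2 = 1 + \caO(\ep)$, from which the claim follows by taking square roots. Using the factorization $\scrH = \scrH_{B(r)} \otimes \scrH_{M_t} \otimes \scrH_{B^c(tv_2)}$, this squared norm equals
\[
\langle \Psi_t,\, P_r \otimes Q_{M_t} \otimes \lone \, \Psi_t \rangle,
\]
with $P_r = |\Psi_{\gs,B(r)}\rangle\langle\Psi_{\gs,B(r)}|$ acting on $\scrH_{B(r)}$ and $Q_{M_t}=|\Om_{M_t}\rangle\langle\Om_{M_t}|$ acting on $\scrH_{M_t}$. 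Cauchy--Schwarz gives $\|\langle\Psi_{\gs,B(r)}\str\Psi_\gs\| \le \|\Psi_{\gs,B(r)}\| \le 1$, so \eqref{ec: first cond on eps} forces $\|\Psi_{\gs,B(r)}\|^2 = 1+\caO(\ep)$; hence $\widetilde P_r:=P_r/\|\Psi_{\gs,B(r)}\|^2$ is a genuine rank-one projection, and it suffices to show $\langle \Psi_t,\, \widetilde P_r \otimes Q_{M_t} \otimes \lone\, \Psi_t\rangle = 1+\caO(\ep)$.

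Since $\widetilde P_r$ and $Q_{M_t}$ act on distinct tensor factors and are both projections, the elementary operator inequality $\lone - AB \le (\lone-A)+(\lone-B)$ (valid for commuting projections, since $(\lone-A)(\lone-B)\ge0$) reduces matters to showing that
\[
\langle \Psi_t,(\lone - \widetilde P_r)\Psi_t\rangle \qquad \text{and} \qquad \langle\Psi_t,(\lone - Q_{M_t})\Psi_t\rangle
\]
(with the natural extensions by identity) are each $\caO(\ep)$. The first equals $1 - \|\langle \Psi_{\gs,B(r)}\str\Psi_t\|^2 / \|\Psi_{\gs,B(r)}\|^2$, and is $\caO(\ep)$ by the first item of the $\ep$-setup preceding the lemma (strong local relaxation of the $B(r)$-marginal) combined with the above normalization.

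For the second, use $\lone - Q_{M_t} \le N_{M_t}$ and then dominate $\lone_{M_t}$ by a smooth indicator. Since $r \le tv_2$ for $t$ large and $\theta^{(2)}_t \equiv 1$ on $B(tv_2)$, the pointwise inequality $\lone_{M_t} \le \theta^{(2)}_t - \lone_{B(r)}$ holds on the one-particle space; because $\d\Ga$ is order-preserving, this upgrades to $N_{M_t} \le \d\Ga(\theta^{(2)}_t) - N_{B(r)}$. Lemma \ref{lem: propagation bound} at $t_c=t$, combined with the identification \eqref{eq: identification a} and item (2) of the $\ep$-setup, yields
\[
\langle\Psi_t,\d\Ga(\theta^{(2)}_t)\Psi_t\rangle = \langle\Psi_\gs,\d\Ga(\theta^{(2)}_t)\Psi_\gs\rangle + \caO(\ep),
\]
while the strong local relaxation extended to $N_{B(r)}$ via \eqref{eq: relaxation unbounded} (equivalently, item (1) of the $\ep$-setup) gives $\langle\Psi_t,N_{B(r)}\Psi_t\rangle = \langle\Psi_\gs,N_{B(r)}\Psi_\gs\rangle + \caO(\ep)$. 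What remains is $\langle\Psi_\gs,\d\Ga(\theta^{(2)}_t-\lone_{B(r)})\Psi_\gs\rangle$, which is bounded by $\langle\Psi_\gs, N_{B^c(r)}\Psi_\gs\rangle \le \ep$ through \eqref{ec: first cond on eps} and the one-particle inequality $\theta^{(2)}_t-\lone_{B(r)} \le \lone_{B^c(r)}$.

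The main subtle point is the role of the smooth indicator $\theta^{(2)}$: it must dominate $\lone_{B(v_2)}$ on the one-particle level (so its dilate majorizes $\lone_{M_t}$ after subtracting $\lone_{B(r)}$) and yet be compactly supported strictly inside the unit ball, so that Lemma \ref{lem: propagation bound} applies at $t_c=t$ with $\caO(t^{-\al})$ error. These are exactly the properties built into $\theta^{(2)}$ in the paragraph preceding Figure \ref{fig: scalesonaxis}; no further ingredients beyond those gathered in Sections \ref{sec: technical tools} and \ref{sec: strong local relaxation} are needed. Combining the two bounds yields $\langle\Psi_t,\widetilde P_r\otimes Q_{M_t}\otimes\lone\,\Psi_t\rangle = 1+\caO(\ep)$; multiplying by $\|\Psi_{\gs,B(r)}\|^2 = 1+\caO(\ep)$ and extracting a square root delivers the claim.
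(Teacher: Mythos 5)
Your proof is correct and follows essentially the same route as the paper: the same one-particle inequality $\lone_{B(r)}+\lone_{M_t}\le\theta^{(2)}_t\le\lone_{B(r)}+\lone_{B^c(r)}$ to control $\langle\Psi_t,N_{M_t}\Psi_t\rangle$, the same replacement of $\Psi_t$ by $\Psi_\gs$ via items (1)--(2) of the $\ep$-setup, and the same projection inequality $(P_1P_2)^\perp\le P_1^\perp+P_2^\perp$ to combine the two estimates.
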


\begin{proof}
From $1_{B(r)} + 1_M \leq \theta_t^{(2)}$ we get
\beq
   \langle \Psi_t,  N_{M_t} \Psi_t \rangle +  \langle\Psi_t,  N_{B(r)}\Psi_t \rangle \leq  \langle \Psi_t, \d\Ga( \theta_t^{(2)}) \Psi_t \rangle.
\eeq
In $\langle\Psi_t,  N_{B(r)}\Psi_t \rangle$ and  $\langle \Psi_t, \d\Ga( \theta_t^{(2)}) \Psi_t \rangle$ 
we now replace the vector $\Psi_t$ by $\Psi_\gs$ using our choices for $r$ and $t(\ep)$ expressed
in Points 1. and 2. above. Then, using $\d\Ga( \theta_t^{(2)}) \leq N_{B(r)} + N_{B^c(r)}$ as well as \eqref{ec: first cond on eps} we arrive at 
\beq
    \langle \Psi_t,  N_{M_t} \Psi_t \rangle \leq \langle \Psi_\gs,  N_{B^c(r)} \Psi_\gs \rangle + 2\ep \leq 3 \ep,
\eeq
Let us now  define the projectors on $\scrH$
 \beq P_1= \frac{\str \Psi_{\gs,B(r)}\rangle \langle \Psi_{\gs,B(r)}\str}{\norm \Psi_{\gs,B(r)} \norm^2} \otimes \lone,\qquad   P_2= \lone \otimes \str \Om_{M_t} \rangle \langle \Om_{M_t}\str \otimes \lone\eeq
and write $P^\perp=\lone-P$ for projectors $P$, then we get (for $j=1$ from \eqref{ec: first cond on eps}   and for $j=2$ from the above)
\beq
   \langle \Psi_t, P^\perp_j \Psi_t \rangle \leq C \ep
\eeq
and therefore, since $(P_1P_2)^\perp \leq P_1^\perp+P_2^\perp $, 
\beq
\langle \Psi_t, (P_1P_2)^\perp \Psi_t \rangle \leq C\ep
\eeq
which yields the lemma.
\end{proof}

\begin{lemma}\label{lem: qit}
Consider Hilbert spaces $\scrH_a, \scrH_b$. 
Let $\Psi \in \scrH_{a} \otimes \scrH_b$, $\norm \Psi\norm  =1$ and let $\Psi_a \in \scrH_a, \Psi_b \in \scrH_b$ be such that
\beq
\Psi_b= \langle \Psi_a \str \Psi, \qquad \str 1- \norm \Psi_a \norm^2\str,   \str 1-\norm \Psi_b \norm^2 \str \leq \nu.
\eeq
Then $\norm \Psi-\Psi_a \otimes \Psi_b \norm^2 \leq C \nu$.
\end{lemma}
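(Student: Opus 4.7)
The proof is a one-page direct computation, essentially exploiting an orthogonality between $\Psi - \Psi_a\otimes\Psi_b$ and $\Psi_a\otimes\Psi_b$ (up to small corrections) coming from the very definition $\Psi_b = \langle\Psi_a|\Psi\rangle$.

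My plan is to expand
\[
\norm \Psi - \Psi_a\otimes\Psi_b \norm^2 = \norm \Psi \norm^2 - 2 \Re \langle \Psi, \Psi_a \otimes \Psi_b\rangle + \norm \Psi_a \otimes \Psi_b \norm^2,
\]
and then evaluate each term. The first is equal to $1$ by hypothesis, and the third equals $\norm \Psi_a \norm^2 \norm \Psi_b \norm^2$, which is at most $(1+\nu)^2$. The key observation is that, by definition of the partial inner product,
\[
\langle \Psi_a \otimes \Psi_b, \Psi \rangle_{\scrH_a\otimes\scrH_b} = \langle \Psi_b, \langle \Psi_a | \Psi\rangle\rangle_{\scrH_b} = \langle \Psi_b, \Psi_b\rangle = \norm \Psi_b \norm^2,
\]
which is real. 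So the cross term contributes exactly $-2\norm\Psi_b\norm^2$.

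Setting $a := \norm \Psi_a \norm^2$ and $b := \norm \Psi_b \norm^2$, both in $[1-\nu, 1+\nu]$, the identity becomes
\[
\norm \Psi - \Psi_a \otimes \Psi_b \norm^2 = 1 - 2b + ab = 1 + b(a-2).
\]
For $\nu < 1$, we have $a - 2 < 0$, so the product $b(a-2)$ is maximized by taking $b$ as small and $a$ as large as allowed, giving
\[
1 + b(a-2) \leq 1 + (1-\nu)(-1+\nu) = 1 - (1-\nu)^2 \leq 2\nu,
\]
so one may take $C = 2$.

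There is no real obstacle here; the content is just the algebraic identity above, the key point being that $\langle \Psi, \Psi_a\otimes\Psi_b\rangle$ is automatically real and equal to $\norm \Psi_b \norm^2$ by the definition of $\Psi_b$ as a partial inner product. For $\nu \geq 1$ the bound is trivial since $\norm \Psi - \Psi_a \otimes \Psi_b\norm \leq \norm \Psi\norm + \norm \Psi_a\norm\norm \Psi_b\norm \leq 1 + (1+\nu)$ is anyway $O(\sqrt{\nu})$ after adjusting $C$.
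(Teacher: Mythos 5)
Your proposal is correct and follows exactly the same route as the paper: the paper's proof consists of the single observation that $\langle \Psi, \Psi_a \otimes \Psi_b\rangle = \|\Psi_b\|^2$ and the resulting identity $\|\Psi-\Psi_a\otimes\Psi_b\|^2 = 1 + \|\Psi_a\|^2\|\Psi_b\|^2 - 2\|\Psi_b\|^2$, leaving the final arithmetic implicit. You merely spell out that last elementary estimate (and the trivial case $\nu\geq 1$), which is fine.
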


\begin{proof}
By the definition of $\Psi_b$,  $\langle \Psi, \Psi_a \otimes \Psi_b \rangle =\norm \Psi_b \norm^2 $ and hence
\beq
\langle \Psi-\Psi_a \otimes \Psi_b, \Psi-\Psi_a \otimes \Psi_b \rangle= 1+\norm \Psi_a \norm^2\norm \Psi_b \norm^2 -2 \norm \Psi_b \norm^2
\eeq
\end{proof}

\subsubsection*{Proof of Proposition \ref{prop: range wave}}
We will apply Lemma \ref{lem: qit} with
$$\Psi=\Psi_t, \qquad \Psi_a= \Psi_{\gs,B(r)}\otimes
\Om_{M_t}, \qquad \Psi_{b}= \langle \Psi_a \str \Psi_t = : \Psi_{B^c(tv_2),t}.$$
To recall our conventions, note that  $B(tv_2)=M_t \cup B(r)$ and  $$  \Psi_{\gs,B(r)}\otimes
\Om_{M_t}\in \scrH_{B(r)}\otimes \scrH_{M_t} =  \scrH_{B(tv_2)}, \qquad  \Psi_{B^c(tv_2),t} \in \scrH_{B^c(tv_2)}.$$ 
By Lemma \ref{lem: close to one dim}, we can take $\nu=C\ep$ and then Lemma \ref{lem: qit}  yields
\beq \label{eq: 1}
\norm \Psi_t-  \Psi_{\gs,B(r)}\otimes \Om_{M_t} \otimes \Psi_{B^c(tv_2),t} \norm  \leq  C \sqrt{ \ep}
\eeq
By the uniform in time photon bound, we have
\beq \label{eq: photon bound outer}
\sup_{t} \langle \Psi_{B^c(tv_2),t}, N \Psi_{B^c(tv_2),t} \rangle \leq C.
\eeq
Since the region where $\theta_t^{(1)}$ is nonconstant is in $B(t v_2) \setminus B(t v_1) \subset M_t$, we have
\beq
 J_{\extension, t}  \Psi_{\gs,B(r)}\otimes \Om_{M_t} \otimes  \Psi_{B^c(tv_2),t}= (\Psi_{\gs,B(r)}\otimes \Om_{B^c(r)}) \otimes (\Om_{B(t v_2)} \otimes \Psi_{B^c(tv_2),t}). 
 \eeq  
Therefore, by \eqref{eq: 1} and the choice of $r$, i.e.\ \eqref{ec: first cond on eps},
\beq \label{eq: conclusion}
\sup_{t\geq t(\ep)} \norm J_t \Psi_t -     \Om_{B(t v_2)} \otimes   \Psi_{B^c(tv_2),t} \norm \leq C \sqrt{ \ep}
\eeq
(keep in mind that we have constrained $t \geq t(\ep)$ throughout this section). 
Next, let us consider the map
\beq
R_{n,r'}:  \lone_{N \leq n} \scrH_{B^c(r')} \to \scrH:  \Psi \mapsto I(\Om_{B(r')}\otimes \Psi) - \Psi_{\gs, B(r')}\otimes \Psi 
\eeq
Then, for any $n$, we have $\lim_{r' \to \infty} \norm R_{n,r'} \norm=0$ by Lemma \ref{lem: exp phonon loc} and the bound \eqref{eq: crude bound on extension i}. Since $\sup_t \norm \Psi_{B^c(tv_2),t} \norm \leq C$, this yields
\beq 
 \lim_{t \to +\infty}\norm I  (  \Om_{B(t v_2)} \otimes \lone_{N \leq n} \Psi_{B^c(tv_2),t}) -  \Psi_{\gs, B(t v_2)}\otimes \lone_{N \leq n}\Psi_{B^c(tv_2),t}  \norm  =0
\eeq
Combining this with \eqref{eq: conclusion} and the fact that $\norm I \lone_{N \leq n} \norm <\infty$, we get 
\beq \label{eq: identitiy with cutoff}
 \lim_{t \to +\infty} \norm I  \lone_{N \leq n}J_t \Psi_t  -  \Psi_{\gs, B(t v_2)}\otimes \lone_{N \leq n} \Psi_{B^c(tv_2),t} \norm \leq C \sqrt{ \ep}
\eeq 
Next, we argue that
\beq \label{eq: relax nb cutoff}
\lim_{n \to +\infty}\limsup_{t \to +\infty} \norm \Psi_t - \Psi_{\gs, B(t v_2)}\otimes \lone_{N \leq n} \Psi_{B^c(tv_2),t}  \norm \leq C \sqrt{ \ep}
\eeq
Indeed, by our choice of $r(\ep)$, i.e.\ \eqref{ec: first cond on eps}, we have
\beq
\sup_{t \geq t(\ep)}\norm \Psi_{\gs, B(tv_2)}- \Psi_{\gs,B(r)}\otimes \Om_{M_t}\norm  \leq C \ep
\eeq
and by the photon bound \eqref{eq: photon bound outer} we have  $ \lim_{n\to +\infty}\sup_t\norm \lone_{N>  n} \Psi_{B^c(tv_2),t}  \norm = 0$.  Therefore,  \eqref{eq: relax nb cutoff} follows from  \eqref{eq: 1}. 

Since $\ep$ can be chosen arbitrarily small, the limits  \eqref{eq: identitiy with cutoff} and \eqref{eq: relax nb cutoff} equal $0$. Together, they imply Proposition \ref{prop: range wave} by the triangle inequality.

\newpage
\section{Existence of wave operator}\label{sec: existence of wave operator}

In this section, we prove Proposition \ref{prop: existence of inverted}. 
We calculate  (the formal manipulations are justified since $ \norm N \Psi_t \norm <\infty$)
\beq \label{eq: cov derivative one}
 \i \frac{\d}{\d t}  \e^{\i t H_{\mathrm{as}}} J_t  \e^{-\i t H}\Psi =   e^{\i t H_{\mathrm{as}}} ( \lone_\sys \otimes B_t)   \Psi_t+    \e^{\i t H_{\mathrm{as}}}   \left(  (1 \otimes  H_I)     \breve\Ga(j)  -  \breve\Ga(j) H_I  \right) \Psi_t 
\eeq
where 
\begin{align}
B_t=  ( \d \Ga(\om) \otimes 1+ 1 \otimes     \d \Ga(\om))     \breve\Ga(j)  -  \breve\Ga(j)  \d\Ga(\om) -\i \frac{\d}{\d t} \breve \Ga(j)
\end{align}
To estimate the second term in \eqref{eq: cov derivative one}, we remark that it is bounded by 
\beq
C\norm a^{\#}(j_{\infty}\phi) \Psi_t  \norm  \leq   C  \norm j_{\infty} \phi \norm \norm N^{1/2} \Psi_t \norm \leq  C  \norm j_{\infty} \phi \norm
\eeq
where we used the photon bound.  By Assumption \ref{ass: infrared behaviour} and the remark following it, we can deduce that  $ \norm j_{\infty} \phi \norm \leq \langle t \rangle^{-(1+\ga)}$ for some $\ga>0$, and therefore this term is integrable.
We focus on the first term in \eqref{eq: cov derivative one}.  From  Lemma 2.16. in \cite{derezinskigerardmassive}, we have the bound
\beq
\norm B_t \Psi_t \norm \leq   2 \norm N^{1/2} \, \Psi_t \norm  \langle \Psi_t,  \d \Ga (w_0^{*}w_0
)  \Psi_t  \rangle^{1/2}
\eeq
where
\beq
w_0= \i [\om, j_0] - \frac{\d}{\d t} j_0.
\eeq
Because of the photon bound in Lemma \ref{lem: bound on number operator}, 
Proposition \ref{prop: existence of inverted} will be proven if we show that
\beq
\langle \Psi_t,  \d \Ga (w_0^{*}w_0) 
 \Psi_t  \rangle^{1/2} \leq C \langle t \rangle^{-(1+\ga)}
\eeq
for some $\ga>0$. In the following the Lemma~\ref{lem: bound on
  dgamma}, below, is used repeatedly.


\subsection{Bound on  $ \langle \Psi_t, \d \Ga (w_0^* w_0)  \Psi_t  \rangle^{1/2}$}

We use the results of Appendix C to write
\begin{align} \label{eq: decomposition of w_0}
w_0 =   \frac{1}{t}  \left(\frac{k}{\str k \str}- \frac{x}{t}\right)\cdot  \nabla \theta( x /t) + \langle t\rangle^{-1-\be} b_t
+ \langle t\rangle^{-1}  b'_t \lone_{[  \str k \str \leq t^{\be-1}]  }
\end{align}
where  the exponent $\be$ satisfies $0<\be<1$  and $b_t,b'_t$ are operators that are bounded uniformly in $t$.  
By using Cauchy-Schwartz and  Lemma  \ref{lem: bound on dgamma}, it suffices to prove that 
\beq
\langle \Psi_t, \d \Ga (w_{0,i}^*w_{0,i}) \Psi_t  \rangle= \caO(t^{-2-\ga}), \qquad \ga>0
\eeq
where $w_{0,i}, i=1,2,3$ is any of the three terms on the RHS of \eqref{eq: decomposition of w_0}. We discuss the three terms separately. 

\subsubsection*{The first term}
Let
\beq
a_t= \left(\frac{k}{\str k \str}- \frac{x}{t}\right)\cdot  \nabla \theta( x /t)
\eeq
Since $\nabla \theta$ has support in $B(1)$ and  $\norm \left(\frac{k}{\str k \str}- \frac{x}{t}\right) 1_{[\str x/t \str\leq 1] } \norm \leq C$,  we can bound
\beq
a^*_ta_t  \leq  C       \str  \nabla \theta( x /t)\str^2
\eeq
Therefore, Lemma \ref{lem: bound on dgamma}  and linearity of $b \to \d \Ga(b)$ yield the bound
\beq \label{eq: something}
 \langle \Psi_t, \d \Ga ( a^*_ta_t ) \Psi_t  \rangle \leq C    \langle \Psi_t,  \d \Ga ( \tilde \theta(x/t)  \Psi_t  \rangle, \qquad  \tilde \theta (x):= \frac{  \str  \nabla \theta( x )\str^2}{\sup_{x \in B(1)} \str \nabla \theta(x) \str^2}
\eeq
Since $\tilde \theta$ is a smooth indicator whose support does not include $0$, we can apply  the bound \ref{eq: field in transition region} with  $\theta$ replaced by $\tilde \theta$ to get  the bound $\caO(t^{-\al})$ for \eqref{eq: something}. This settles the first term in \eqref{eq: decomposition of w_0}.

\subsubsection*{The second term}
By Lemma \ref{lem: bound on dgamma},
\beq
\langle \Psi_t,  \d \Ga (b_t^*b_t) \Psi_t  \rangle \leq \norm b_t^*b_t \norm \langle \Psi_t ,N \Psi_t  \rangle  \leq C
\eeq
where the last inequality follows by the uniform boundedness of $b_t$ and the photon bound, Lemma \ref{lem: bound on number operator}.  This suffices for the second term, since we have $\be>0$.

\subsubsection*{The third term}
Again Lemma \ref{lem: bound on dgamma} gives
\beq
\d\Ga (1_{[  \str k \str \leq t^{\be-1}]  }  (b_t')^*b_t' 1_{[  \str k \str \leq t^{\be-1}]  })  )  \leq  \norm  (b'_t)^*b_t' \norm \d\Ga(1_{[  \str k \str \leq t^{\be-1}]  }) 
\eeq
so we just have to bound 
\beq
\langle \Psi_t,  \d\Ga(1_{[  \str k \str \leq t^{\be-1}]  })  \Psi_t   \rangle
\eeq
We invoke Lemma \ref{lem: control on small momenta} to bound the last expression by $\caO(t^{\al(\be-1)/2})$. Since $0<\be<1$, this is a negative power of $t$.

\medskip
The following lemma has been used repeatedly.  Its elementary proof can be found e.g. in \cite{DG1999}.

\begin{lemma}\label{lem: bound on dgamma}
Let $a,b$ be bounded operators on $\frh$. 
\begin{itemize}
\item[$1)$] If  $a \leq b$, then 
\beq
\d \Ga (a)  \leq  \d \Ga (b)
\eeq
In particular $ \d \Ga (a) \leq  \norm a \norm N$ for any self-adjoint $a$. 
\item[$2)$]
\beq \label{eq: cs dgamma}
\str \langle \Psi\d \Ga (a b) \Psi\rangle \str  \leq \langle \Psi\d\Ga(aa^* )\Psi\rangle^{1/2} \langle\Psi\d\Ga (b^* b)\Psi\rangle^{1/2}
\eeq
\end{itemize}
\end{lemma}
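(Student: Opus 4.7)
The plan is to reduce both assertions to elementary computations on each $n$-particle sector of Fock space. Recall that for bounded $c \in \caB(\frh)$, the operator $\d\Ga(c)$ acts on the $n$-fold symmetric tensor product $P_S \frh^{\otimes n}$ as $\sum_{i=1}^n c_{(i)}$, where $c_{(i)} := \lone \otimes \cdots \otimes c \otimes \cdots \otimes \lone$ with $c$ in the $i$-th slot. Because the sectors are mutually orthogonal and individually invariant under $\d\Ga(c)$, it suffices to verify both inequalities sector-by-sector on the dense finite-particle subspace $\caD_{fin}$ and then extend by density or monotone convergence.

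For part $1)$, if $a \leq b$ in $\caB(\frh)$, then $a_{(i)} \leq b_{(i)}$ on $\frh^{\otimes n}$ for each $i$, since tensoring with the identity preserves operator order. Summing over $i$ yields $\d\Ga(a) \leq \d\Ga(b)$ on the $n$-particle sector, which then passes to $\caD_{fin}$ by orthogonality of the sectors. The special case follows from the trivial inequality $-\norm a \norm \lone \leq a \leq \norm a \norm \lone$ valid for self-adjoint $a$, combined with $\d\Ga(\lone) = N$.

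For part $2)$, on the $n$-particle sector one has $\d\Ga(ab) = \sum_i (ab)_{(i)} = \sum_i a_{(i)} b_{(i)}$, so that
\[
\langle \Psi_n, \d\Ga(ab) \Psi_n \rangle = \sum_{i=1}^n \langle a_{(i)}^{*} \Psi_n, b_{(i)} \Psi_n \rangle.
\]
Two consecutive applications of Cauchy--Schwarz --- first inside each inner product, then on the sum over $i$ --- yield
\[
\str \langle \Psi_n, \d\Ga(ab) \Psi_n \rangle \str \leq \Bigl( \sum_i \norm a_{(i)}^{*} \Psi_n \norm^2 \Bigr)^{1/2} \Bigl( \sum_i \norm b_{(i)} \Psi_n \norm^2 \Bigr)^{1/2},
\]
and the two factors on the right are exactly $\langle \Psi_n, \d\Ga(aa^{*}) \Psi_n \rangle^{1/2}$ and $\langle \Psi_n, \d\Ga(b^{*}b) \Psi_n \rangle^{1/2}$ by the identity just used (applied with $b = a^{*}$ and with $a=b^{*}$). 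A final Cauchy--Schwarz over the Fock sectors $n \in \bbN$, using that $\d\Ga(aa^{*})$ and $\d\Ga(b^{*}b)$ preserve particle number, promotes the sector-wise bound to the claimed inequality for the full vector $\Psi$.

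No substantive obstacle is expected: the lemma is a purely algebraic statement about the second-quantization functor, requiring only the explicit sector-wise form of $\d\Ga$ and two applications of Cauchy--Schwarz. The only mild care concerns operator domains: one carries out the manipulations on vectors in $\caD_{fin}$ and then extends to the natural quadratic-form domains on which the inequalities are meant to hold, which is routine since $N^{1/2}\Psi_t$ and all relevant vectors appearing in the applications of the lemma in the main text lie in these domains by Lemma~\ref{lem: bound on number operator}.
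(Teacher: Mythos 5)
Your proof is correct. The paper itself does not supply an argument for this lemma but refers to \cite{DG1999}, and your sector-wise computation --- positivity of $a\mapsto a\otimes\lone\otimes\cdots\otimes\lone$ for part $1)$ and two applications of Cauchy--Schwarz (inside each inner product and over the slots, plus a final one over the Fock sectors) for part $2)$ --- is precisely the standard elementary proof found there, so nothing further is needed.
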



\appendix
\renewcommand{\theequation}{\Alph{section}\arabic{equation}}
\setcounter{equation}{0}

%

%


\section{Commutator estimates}  
 \label{appendixC}
 
  \renewcommand{\theequation}{C-\arabic{equation}}
  \setcounter{equation}{0}  

 Let $f \in \caC_0^{\infty}(\bbR^d)$ and $f_t(x):= f(x/t)$.  Denote by 
  $f_t$ the corresponding multiplication operator  on $L^2(\bbR^d)$ and by $\om$ be the operator on $L^2(\bbR^d)$
 given by the Fourier multiplier $\om(k)=|k|$.  We want to study the commutator $[\om, f_t]$:
  \begin{lemma}  \label{lem: bound on commutator}
 Let $\nabla\om$ be the Fourier multiplier  $\nabla\om(k)=\hat k\equiv k/|k|$.  Then
 \beq\i [\om, f_t]
=  \nabla \om\cdot \nabla f_t + O_t
\eeq 
 where, for 
 $\be>0$ 
  \beq
  \norm O_t \lone_{[    t^{\be-1} \leq \str k \str ]} \norm \leq C t^{-1-\be}, \qquad  \norm O_t  \norm \leq C t^{-1
  }.   \label{eq: two commutator estimates}
 \eeq
\end{lemma}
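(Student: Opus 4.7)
The plan is to work in the momentum representation, where $f_t$ acts as convolution with $\hat f_t$ and $\omega$ as multiplication by $|k|$. A direct calculation shows that the integral kernel of $[\omega,f_t]$ is $(\omega(k)-\omega(k'))\,\hat f_t(k-k')/(2\pi)^d$ and that of $\nabla\omega\cdot\nabla f_t$ is $\i\,\nabla\omega(k)\cdot(k-k')\,\hat f_t(k-k')/(2\pi)^d$; subtracting, the kernel of $O_t$ becomes $\i\, R(k,k-k')\,\hat f_t(k-k')/(2\pi)^d$, where $R(k,\xi):=\omega(k)-\omega(k-\xi)-\nabla\omega(k)\cdot\xi$ is the second-order Taylor remainder of $\omega$. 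Since $\hat f_t(\xi)=t^d\hat f(t\xi)$ and $\hat f$ is Schwartz, the weight $\hat f_t$ is essentially concentrated on the Fourier scale $|\xi|\lesssim t^{-1}$.

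For the uniform bound $\|O_t\|\le Ct^{-1}$ I would use the representation
\[
[\omega,f_t]=(2\pi)^{-d}\int \hat f_t(\xi)\,[\omega,\e^{\i\xi\cdot x}]\,d\xi
\]
together with the fact that $[\omega,\e^{\i\xi\cdot x}]$ acts in momentum space as translation by $\xi$ followed by multiplication by $\omega(k)-\omega(k-\xi)$, hence has norm at most $|\xi|$ because $\omega$ is $1$-Lipschitz. This yields $\|[\omega,f_t]\|\le C\int|\xi||\hat f_t(\xi)|\,d\xi=O(t^{-1})$, and the trivial bound $\|\nabla\omega\cdot\nabla f_t\|\le\|\nabla f_t\|_\infty=O(t^{-1})$ then closes the first estimate.

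For the sharper bound on $\{|k|\ge t^{\be-1}\}$, I would split the $\xi$-integration into a near piece $|\xi|\le t^{\be-1}/2$ and a far piece $|\xi|>t^{\be-1}/2$. In the near piece, any input whose Fourier transform is supported in $\{|k|\ge t^{\be-1}\}$ forces $|k-\xi|\ge t^{\be-1}$ in the kernel formula, and together with $|\xi|\le t^{\be-1}/2$ this keeps the entire segment $\{k-s\xi:s\in[0,1]\}$ inside $\{|p|\ge t^{\be-1}/2\}$. There $|\nabla^2\omega(p)|\le C/|p|\le Ct^{1-\be}$, so Taylor's theorem gives $|R(k,\xi)|\le C|\xi|^2 t^{1-\be}$; integrating against $\hat f_t$ and rescaling $\eta=t\xi$ produces a contribution of order $Ct^{1-\be}\int|\xi|^2|\hat f_t(\xi)|\,d\xi=O(t^{-1-\be})$. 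In the far piece I would use the trivial bound $|R(k,\xi)|\le 2|\xi|$ and the Schwartz decay of $\hat f$: the same rescaling gives $\int_{|\xi|>t^{\be-1}/2}|\xi||\hat f_t(\xi)|\,d\xi=t^{-1}\int_{|\eta|>t^\be/2}|\eta||\hat f(\eta)|\,d\eta=O(t^{-N})$ for every $N$, which is absorbed into $O(t^{-1-\be})$.

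The main point requiring care is the control of the Taylor remainder of $\omega$ in a neighbourhood of the origin in momentum space, where $\nabla^2\omega$ blows up. The argument rests on the interplay between two scales: the natural Fourier scale $t^{-1}$ of $\hat f_t$ and the infrared cut-off $t^{\be-1}$ built into the projector. The gap between them, measured precisely by the exponent $\be$, is what produces the improvement in the error estimate; once the Taylor segment is confined to the smooth region of $\omega$, the rest is a routine scaling computation.
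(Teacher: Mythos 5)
Your proof is correct and follows essentially the same route as the paper's: both pass to the momentum-space kernel, identify $O_t$ with the second-order Taylor remainder of $\omega(k)-\omega(k-\xi)$ weighted by $\hat f_t(\xi)$, and exploit the separation between the Fourier scale $t^{-1}$ of $\hat f_t$ and the infrared cutoff $t^{\beta-1}$ together with $|\nabla^2\omega(p)|\lesssim |p|^{-1}$. If anything, your explicit splitting of the $\xi$-integration and the observation that the Taylor segment stays in $\{\,|p|\ge t^{\beta-1}/2\}$ is slightly more careful than the paper's version, which bounds the remainder by $C p^2/(|p|+|k|)$ and then estimates rather tersely on the cut-off subspace.
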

\noindent {\bf Remark.} If  $\omega(k)$ were $C^2$  a standard result
 (e.g.\ Lemma 27 in \cite{frohlichgriesemerschleinrayleigh}) gives 
  \begin{align}
 \norm O_t  \norm \leq  t^{-2}   \norm \partial^2 \om \norm_{\infty}   \int \d q  \str \hat  f(q) \str  \str q\str^2 
 \end{align}
In our case $\str\partial^2 \om(p) \str \sim \frac{1}{\str p \str}$ and this result   is not applicable.

\begin{proof}
Let $\phi\in L^2(\bbR^d)$. We have
$$
\widehat{( [\om, f_t]\phi)} (k)=\int \d p\, \hat f_t(p)(\om(k)-\om(k-p))\hat\phi(k-p).
$$
Write
$$
\om(k)-\om(k-p)=|k|-|k-p|=p\cdot \hat k+r(k,p).
$$
We have, for  $|p|\leq |k|/2$
$$
|r(k,p)|=||k|-|k-p|-p\cdot \hat k|\leq Cp^2/|k|
$$ and for $|p|\geq |k|/2$,
$
|r(k,p)|\leq C|p|
$. Hence altogether
$$
|r(k,p)|\leq C\frac{p^2}{|p|+|k|}.
$$
We obtained
$$
i [\om, f_t]=\hat k \cdot\nabla f_t+R
$$
with
$$
| \widehat{(R\phi)} (k)|\leq C\int \d p\, |\hat f_t(p)|\frac{p^2}{|p|+|k|} |\hat\phi(k-p)|.
$$
Hence
$$
\|R\phi\|^2\leq \int \d p_1\d p_2 \d k \, 
\prod_{i=1}^2\hat f_t(p_i)|\frac{p_i^2}{|p_i|+|k|} |\hat\phi(k-p_i)|.
$$
Note that $\hat f_t(p)=t^d \hat f(tp)$ with $ \hat f$ rapidly decreasing. Thus, for $n \in \bbN$
 \beq\label{fmom}
\int \d p\, |\hat f_t(p)|p^n\leq C(n) t^{-n}.
\eeq
First, we use $\frac{p_i^2}{|p_i|+|k|}\leq |p_i|$ 
to bound
  \beq\label{smallk}
\|R\phi\|^2 \leq
 \int \d p_1\d p_2 \,\prod_{i=1}^2\hat p_if_t(p_i)|\int \d k\, 
|\hat\phi(k-p_1)||\hat\phi(k-p_2)|\leq
 C t^{-2}\|\phi\|^2
\eeq
where we used 
\eqref{fmom} with $n=1$ and Cauchy-Schwartz. This yields the second bound of \eqref{eq: two commutator estimates}.
For $\phi$ such that $\supp\ \hat\phi\subset \{|k|\geq1/t^{1-\beta}\}$,  we use $\frac{p_i^2}{|p_i|+|k|}\leq |p_i|^2/|k|$,  \eqref{fmom} with $n=2$ 
and $|k|^{-1}\leq t$, to  get
 \beq\label{largek}
\|R\phi\|^2 \leq
 C t^{-2-2\beta}\|\phi\|^2,
\eeq
which gives the first bound of \eqref{eq: two commutator estimates}.

\end{proof}

We also need
\begin{lemma}
Let 
 \beq
O_t:=  [ \hat k, f_t(x)] , 
\eeq 
Then 
for $\be>0$
  \beq
 \norm O_t \lone_{[    t^{\be-1} \leq \str k \str ]} \norm \leq C t^{-\be}, \qquad 
 \norm O_t   \norm \leq C.  
 \eeq
\end{lemma}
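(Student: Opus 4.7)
The uniform bound $\|O_t\| \leq C$ is essentially immediate: $\hat k$ is a Fourier multiplier of norm $1$ (since $|\hat k(k)|=1$), and $f_t$ is a multiplication operator with $\|f_t\|=\|f\|_\infty$, so $\|O_t\|\leq 2\|f\|_\infty$. All the work lies in the infrared-improved bound on $\{|k|\geq t^{\beta-1}\}$, and here I would mimic the argument of Lemma~\ref{lem: bound on commutator} verbatim.

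The plan is to work in Fourier space. For $\phi\in L^2(\bbR^d)$,
\beq
\widehat{(O_t \phi)}(k) = \int \d p\, \hat f_t(p)\,\bigl(\hat k - \widehat{k-p}\bigr)\,\hat\phi(k-p),
\eeq
so everything reduces to a pointwise estimate on the ``difference of unit vectors'' symbol $s(k,p):= \hat k - \widehat{k-p}$. The key inequality I would establish by elementary casework is
\beq
|s(k,p)| \leq C\,\frac{|p|}{|p|+|k|}.
\eeq
Indeed, for $|p|\leq |k|/2$ one has $|k-p|\geq |k|/2$ and a first-order Taylor expansion of $q\mapsto q/|q|$ near $k$ gives $|s(k,p)|\leq C|p|/|k|$; for $|p|>|k|/2$ one uses the trivial bound $|s(k,p)|\leq 2$, which is then $\leq 4|p|/(|p|+|k|)$. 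This is the only non-mechanical step, and I expect it to be routine (the analogous bound $|r(k,p)|\leq C p^2/(|p|+|k|)$ is the main estimate in the proof of Lemma~\ref{lem: bound on commutator}).

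Now, to get the sharp bound on $\{|k|\geq t^{\beta-1}\}$, restrict to $\phi$ with $\supp\hat\phi\subset\{|k|\geq t^{\beta-1}\}$ and use $|s(k,p)|\leq C|p|/|k|$ together with $|k|^{-1}\leq t^{1-\beta}$:
\beq
\|O_t\phi\|^2 \leq C\,t^{2(1-\beta)} \int \d p_1\d p_2\, |\hat f_t(p_1)||\hat f_t(p_2)|\,|p_1||p_2|\int \d k\, |\hat\phi(k-p_1)||\hat\phi(k-p_2)|.
\eeq
Applying Cauchy--Schwarz to the inner $k$-integral gives $\|\phi\|^2$, and applying the scaling bound $\int|\hat f_t(p)||p|\,\d p \leq C t^{-1}$ (i.e.\ \eqref{fmom} with $n=1$) to each $p_i$-integral yields
\beq
\|O_t\phi\|^2 \leq C\,t^{2(1-\beta)}\,t^{-2}\,\|\phi\|^2 = C\,t^{-2\beta}\|\phi\|^2,
\eeq
which is the claimed bound. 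The only potential obstacle is the unit-vector estimate above; everything else is a direct transcription of the previous lemma's proof, with $|p|/(|p|+|k|)$ playing the role that $p^2/(|p|+|k|)$ played there.
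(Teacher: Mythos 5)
Your proposal is correct and follows essentially the same route as the paper: the paper's proof likewise writes the Fourier kernel, bounds the symbol $\bigl|\hat k-\widehat{k-p}\bigr|$ by $C\,|p|/(|p|+|k|)$, and then says ``proceed as in the previous lemma,'' which is exactly the Cauchy--Schwarz plus $\int|\hat f_t(p)|\,|p|\,\d p\leq Ct^{-1}$ computation you carry out. The only (shared, and harmless) imprecision is that the support restriction sits on $|k-p|$ rather than $|k|$, but the symbol bound holds equally with $|k-p|$ in the denominator, so the argument goes through unchanged.
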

\begin{proof} 
We have
$$
\widehat{(O_t\phi)} (k)=\int \d p\hat f_t(p)q(k,p)\hat\phi(k-p).
$$
with
$$
|q(k,p)|=|\frac{ k}{|k|}-\frac{ k-p}{|k-p|}|\leq \frac{ |p|}{|p|+|k|}.
$$
Now proceed as in the previous lemma.
\end{proof}


%
%
%

\section{Localization of ground state photons}  
\label{app: virial argument}
 
\renewcommand{\theequation}{E-\arabic{equation}}
\setcounter{equation}{0}  

Here we prove Lemma \ref{lem: localization ground state}. By the
choice of $\theta_t$, we have for some $C>0$ and all $x\in\bbR^3$,
\beq \label{eq: bounding theta}
     0 \leq \theta_{t}(x) \leq C|t|^{-\alpha} |x|^\alpha. 
\eeq
It follows that 
\beq\label{eq:bounding theta2}
  \langle \Psi_\gs,  \d \Gamma(\theta_t) \Psi_\gs\rangle \leq
  C|t|^{-\alpha}  \int |x|^{\alpha} \|\Psi_\gs(x)\|^2 \d x
\eeq
where $\Psi_\gs(x)$ denotes the (inverse) Fourier transform of $a_k\Psi_\gs$.
Hence it remains to prove that \eqref{eq:bounding theta2} is finite. By a
well-know pull-through trick \cite{BFS1998,gerardgroundstate},
\beq
a(k)\Psi_\gs = \la \hat\phi(k) (H-E+\str k \str)^{-1} D  \Psi_\gs.        
\eeq
This implies, by  Assumption~\ref{ass: infrared behaviour} on $\phi$, that for some $\beta>\alpha$,
\begin{equation*}
     \|\partial_k^{m}a(k)\Psi_\gs\| \leq C |k|^{(\beta-3)/2 -
       |m|},\qquad |m|\leq 3.
\end{equation*}
The finiteness of \eqref{eq:bounding theta2} now follows from a
straightforward extension of Lemma A.1 in
\cite{deroeckkupiainenpropagation} to functions taking values in a
Hilbert space. This finishes the proof of Lemma \ref{lem: localization
  ground state}.

\bibliographystyle{alpha}
\bibliography{mylibrary11,ac-lit}

\end{document}